\theoremstyle{plain}
\newtheorem{theorem}{Theorem}[section]
\newtheorem{corollary}[theorem]{Corollary}
\theoremstyle{definition}
\newtheorem{definition}[theorem]{Definition}
\theoremstyle{plain}
\theoremstyle{definition}
\newcommand{\xmark}{$\times$\space}
\newcommand{\cmark}{\checkmark\space}
\setlist[enumerate]{itemsep=0mm,topsep=1mm}
\setlist[itemize]{itemsep=0mm,topsep=1mm}
\begin{document}
	
\pagenumbering{gobble}
\begin{titlepage}	
{ 
	\title{Cheating by Duplication: \\
		Equilibrium Requires Global Knowledge
		{\let\thefootnote\relax\footnote{
			This research was supported by the Israel Science Foundation (grant 1386/11).
		}}
		{\let\thefootnote\relax\footnote{\textbf{Contact:}
			\href{mailto:moshe.sulamy@cs.tau.ac.il}{moshe.sulamy@cs.tau.ac.il}}
		}
	}
	\author{Yehuda Afek}
	\author{Shaked Rafaeli}
	\author{Moshe Sulamy}
	\affil{Tel-Aviv University}
	\date{}
	\maketitle
}

	\begin{abstract}
		The question of what global information must distributed rational agents a-priori know about the network in order for equilibrium to be possible is researched here.
Until now, distributed algorithms with rational agents have assumed that $n$, the size of the network,
is a-priori known to the participants.
We investigate the above question, considering different distributed computing problems and showing how much each agent must a-priori know about $n$ in order for distributed algorithms to be equilibria. The main tool considered throughout the paper is the advantage an agent may gain by duplication- pretending to be more than one agent. 


We start by proving that when no bound on $n$ is given equilibrium for Coloring and Knowledge Sharing is impossible.
We provide new algorithms for both problems when $n$ \emph{is} a-priori known to all agents,
thus showing that there are algorithms in which the only way for an agent to gain an advantage is duplication.
We further show that for each distributed problem there is an a-priori known range, an upper and a lower bound on $n$, such that if the actual $n$ is guaranteed to lay in that range, equilibrium is possible.
By providing equilibria for a specific range, and impossibility results for any larger range, we prove the tight range necessary for equilibrium in: Leader Election, Knowledge Sharing, Coloring, Partition and Orientation.

	\end{abstract}

	\hfill
	\begin{center}
		{ \bfseries Regular submission }

		{ Eligible for best student paper award. }
	\end{center}
	
\end{titlepage}
\pagenumbering{arabic}

\section{Introduction}
\label{section_intro}

The complexity and simplicity of most distributed computing problems
depend on the inherent a-priori knowledge given to all participants.
Usually, the more information processors in a network start with,
the more efficient and simple the algorithm for a problem is.
Sometimes, this information renders an otherwise unsolvable problem, solvable. 

We consider a network of \emph{rational} agents \cite{Abraham:2011,DISC13/ADH}
who participate in an algorithm and may deviate from it if they deem a deviation more profitable for them,
i.e., the execution is more likely to output their desired output.
To differentiate from Byzantine faults,
we require the \emph{Solution Preference} property 
that ensures agents never prefer an outcome in which the algorithm fails
(e.g., terminates incorrectly) over an outcome with a legal output.
Previous  works in this setting \cite{DISC13/ADH,Afek:2014:DCB:2611462.2611481,Halpern:2016:RCE:2933057.2933088}
assumed agents a-priori know $n$, the number of agents in the network (henceforth called the \emph{actual} number).

Our model is motivated by multi-agent protocols in which the participants may cheat
in order to achieve the result \emph{they} think is best for them.
Consider a distributed frequency assignment (Coloring) between cellular network providers,
in which each prefers a certain frequency (color) for which it already has equipment or infrastructure.
Companies may then cheat in the distribution process in order for their preferred frequency to be assigned to them.
Another example is an online game,
in which players start by selecting the player that will host the game
and thus enjoy the best network latency in the game to follow.

In this paper we examine the a-priori knowledge about $n$ required for equilibrium in a distributed network of \emph{rational} agents, each of which has a preference over the output.
Unlike the case in which $n$ is known,
agents may also deviate from the algorithm by \emph{duplicating} themselves to affect the outcome.
This deviation is also known as a Sybil Attack \cite{Douceur:2002:SA:646334.687813},
commonly used to manipulate internet polls,
increase page rankings in Google \cite{Bianchini:2005:IP:1052934.1052938}
and affect reputation systems such as eBay \cite{Bhattacharjee:2005:ABS:1080192.1080203, Cheng:2005:SRM:1080192.1080202}.  In this paper, we use a Sybil Attack to prove impossibility of equilibria.
For each problem presented, an equilibrium when $n$ is known is provided or was provided in a previous work,
thus in these cases deviations that do not include duplication cannot benefit the agents.
Obviously, deviations from the algorithm that include both duplicating and additional cheating are also possible.

The problems we examine here can be solved in standard distributed computing without any knowledge about $n$,
since we can easily acquire the size of the network by a broadcast and echo.
However, learning the size of the network reliably is no longer possible with rational agents and thus,
for some problems, a-priori knowledge of $n$ is critical for equilibrium.

Intuitively, the more agents an agent is disguised as,
the more power to affect the output of the algorithm it has.
For every problem, we strive to find the maximum number of duplications a cheater may be allowed to duplicate without gaining the ability to affect the output, i.e., equilibrium is still possible.
This maximum number of duplications depends on whether other agents will detect that a duplication has taken place,
since the network could not possibly be this large.
To detect this situation they need to possess knowledge about the network size, or about a specific structure.

In this paper, we translate this intuition into a precise computation of the relation between the lower bound $\alpha$ and the upper bound $\beta = f(\alpha)$ on  $n$, that must be a-priori known in order for equilibrium to be possible. 
We denote this relation $f$-bound.
These bounds hold for both deterministic and non-deterministic algorithms.

To find the $f$-bound of a problem,
we first show what is the minimum number of duplications
for which equilibrium is impossible when $n$ is not known at all,
and then show an algorithm that is an equilibrium when the amount of duplications is
limited to that number, as well as when $n$ itself is a-priori known.
Finally, we calculate the $f$-bound by balancing out the profit an agent may gain by duplicating itself
against the risk it takes of being caught.

Table \ref{tbl:results} summarizes our contributions and related previous work (where there is a citation).
A \cmark mark denotes that we provide herein an algorithm for this case,
and an \xmark denotes that we prove that no equilibrium is possible in that case.
Known $n$ refers to algorithms in which $n$ is a-priori known to all agents.
Unknown $n$ refers to algorithm or impossibility of equilibrium when agents a-priori know \emph{no} bound on $n$.
The $f$-bound for each problem is a function $f$ for which there is an equilibrium when
the upper and lower bounds on $n$ satisfy $\alpha \leq \beta \leq f(\alpha)$,
and no equilibrium exists when $\beta > f(\alpha)$.
A problem is $\infty$-bound if there is an equilibrium given \emph{any} finite bound,
but no equilibrium exists if no bound or information about $n$ is a-priori given.
A problem is \emph{unbounded} if there is an equilibrium even when neither $n$ nor any bound on $n$ is given.

\begin{table}[ht]
	\centering
	\begin{tabular}{|c|c|c|c|}
		\hline
		Problem & Known $n$ & Unknown $n$ & $f$-bound  \\ \hline
		Coloring
		& \checkmark
		& \xmark
		& $\infty$*
		\\ \hline
		\shortstack{Leader Election\\\space}
		& \shortstack{\checkmark \\ ADH'13 \cite{DISC13/ADH}}
		& \shortstack{\xmark \\ ADH'13 \cite{DISC13/ADH}}
		& \shortstack{$(\alpha+1)$ \\\space}
		\\ \hline
		Knowledge Sharing
		& \multirow{2}{*}{\shortstack{\checkmark\\ AGLS'14 \cite{Afek:2014:DCB:2611462.2611481}}}
		& \multirow{2}{*}{\xmark}
		& $(2\alpha-2)$*
		\\ \cline{1-1} \cline{4-4}
		$2$-Knowledge Sharing
		&
		&
		& $\infty$*
		\\ \hline
		Partition, Orientation
		& \checkmark
		& \checkmark
		& Unbounded
		\\ \hline
	\end{tabular}
	\caption{Summary of paper contributions,
		equilibria and impossibility results for different problems with different a-priori knowledge about $n$
		\\\textbf{*} $f$-bound proven for a ring graph}
	\label{tbl:results}
\end{table}

\subsection{Related Work}

The connection between distributed computing and game theory
stemmed from the problem of secret sharing \cite{CACM/Shamir79}.
Further works continued the research on secret sharing and multiparty computation
when both Byzantine and rational agents are present
\cite{PODC/AbrahamDGH06, PODC/DaniMRS11, TCC/FuchsbauerKN10, SCN/GordonK06,  ICALP/GroceKTZ12, CYPTO/LysyanskayaT06}. 

Another line of research presented the BAR model (Byzantine, acquiescent \cite{OPODIS/WongLACD11} and rational) \cite{SOSP/AiyerACDMP05, PODC/MoscibrodaSW06, OPODIS/WongLACD11},
while a related line of research discusses converting solutions with a mediator to
cheap talk \cite{PODC/AbrahamDGH06, TCC/AbrahamDH08, Barany1992, JET/Ben-Porath03,  CRYPTO/DodisHR00,
	PODC/LepinskiMP04, TARK/McGrewPS03, TCS/Shoham05, ECON/UAV02, ET/UA04}.

Abraham, Dolev, and Halpern \cite{DISC13/ADH} were the first to present
protocols where processors in the network behave as rational agents,
specifically protocols for Leader Election.
In \cite{Afek:2014:DCB:2611462.2611481} the authors continue this line
of research by providing basic building blocks for game theoretic distributed algorithms,
namely a wake-up and knowledge sharing equilibrium building blocks.
Algorithms for consensus, renaming, and leader election are presented using these building blocks.
Consensus was researched further by Halpern and Vilacça \cite{Halpern:2016:RCE:2933057.2933088},
who showed that there is no ex-post Nash equilibrium, and a Nash equilibrium that tolerates $f$ failures
under some minimal assumptions on the failure pattern.

Coloring and Knowledge Sharing have been studied extensively in a distributed setting
\cite{AttiyaBook,Awerbuch:1989:NDL:1398514.1398717,Cole:1986:DCT:10366.10368,
	Kuhn:2006:CDG:1146381.1146387,Linial1986,Linial:1987:DGA:1382440.1382990,
	Szegedy:1993:LBG:167088.167156}.
An algorithm for Knowledge Sharing  with rational agents was presented in \cite{Afek:2014:DCB:2611462.2611481}, while
Coloring with rational agents has not been studied previously, to the best of our knowledge.

\section{Model}
\label{section_model}
We use the standard message-passing model, where the network is a bidirectional graph
$G=\left(V,E\right)$ with $n$ nodes, each node representing an agent,
and $|E|$ edges over which they communicate. 
$G$ is assumed to be $2$-vertex-connected\footnotemark.
Throughout the entire paper, $n$ always denotes the actual number of nodes in the network.
\footnotetext{
	This property was shown necessary in \cite{Afek:2014:DCB:2611462.2611481}, since if such a node exists it can alter any message passing through it.
	Such a deviation cannot be detected since all messages between the sub-graphs this node connects
	must traverse through it.
	This node can then skew the algorithm according to its preferences.
}

Initially, each agent knows its \emph{id} and input,
but not the \emph{id} or input of any other agent. 
We assume the prior of each agent over any information it does not know is uniformly distributed over all possible values. 
Each agent is assigned a unique \emph{id},
taken from the set of natural numbers.
Furthermore, we assume all agents start the protocol together, i.e., all agents wake-up at the same time.
If not, we can use the Wake-Up \cite{Afek:2014:DCB:2611462.2611481} building block to relax this assumption. 

\subsection{Equilibrium in Distributed Algorithms}
\label{section_legal_correct}

Informally, a distributed algorithm is an equilibrium if no agent at no point in the execution can do better by unilaterally deviating from the algorithm. 
When considering a deviation, an agent assumes all other agents follow the algorithm,
i.e., it is the only agent deviating.

Following the model in \cite{DISC13/ADH, Afek:2014:DCB:2611462.2611481}
we assume an agent always aborts the algorithm whenever it detects a deviation made by another agent,
even if the detecting agent can gain by not aborting.  
In this manner, our notion of equilibrium is basically Bayes-Nash equilibrium\footnotemark,
but not sequential equilibrium \cite{KrepsWilson82}.
We elaborate on this difference in the Discussion (Section~\ref{section_discussion}).

\footnotetext{
	In \cite{DISC13/ADH}, sequential equilibrium is shown by an additional assumption on the utility function;
	however, if an agent detects a deviation that does not necessarily lead to the algorithm failure,
	it is still not a sequential equilibrium.
}

Each node in the network is a \emph{rational} agent,
following the model in \cite{Afek:2014:DCB:2611462.2611481}.
The algorithms produce a single output per agent, once, at the end of the execution.
Each agent has a preference only over its own output.

Formally, let $o_a$ be the output of agent $a$, let $\Theta$ be the set of all possible output vectors,
and denote the output vector $O =\{o_1,\dots,o_n\} \in \Theta$, where $O[a]=o_a$.
Let $\Theta_L$ be the set of \emph{legal} output vectors, in which the protocol terminates successfully,
and let $\Theta_E$ be the set of \emph{erroneous} output vectors, such that
$\Theta = \Theta_L \cup \Theta_E$ and $\Theta_L \cap \Theta_E = \varnothing$.

Each agent $a$ has a utility function $u_a: \Theta \rightarrow \mathbb{N}$.
The higher the value assigned by $u_a$ to an output vector,
the better this vector is for $a$.
To differentiate rational agents from Byzantine faults,
we assume the utility function satisfies \emph{Solution Preference} \cite{DISC13/ADH,Afek:2014:DCB:2611462.2611481}
which guarantees that an agent never has an incentive to cause the algorithm to fail.
                                                                                     
\begin{definition}[Solution Preference]                                              
	The utility function $u_a$ of an agent $a$ never assigns a higher utility           
	to an erroneous output than to a legal one, i.e.:                                   
	$$\forall{a,O_L \in \Theta_L,O_E \in \Theta_E}: u_a(O_L) \geq u_a(O_E)$$
\end{definition}

We differentiate the \emph{legal} output vectors, which ensure the output is valid and not erroneous, from the \emph{correct} output vectors, which are output vectors that are a result of a correct execution of the algorithm, i.e., without any deviation.
The Solution Preference guarantees agents never prefer an erroneous output. However, they may prefer a \emph{legal} but \emph{incorrect} output.

Recall that we assume agents only have preferences over their own output,
i.e., for any $O_1,O_2 \in \Theta_L$ where $O_1[a]=O_2[a]$, $u_a(O_1) = u_a(O_2)$.
For simplicity, we also assume each agent $a$ has a \emph{single} preferred output value $p_a$,
and we normalize the utility function values, such that\footnotemark:
\begin{equation}
\label{eq_u_a}
u_a(O)=\begin{cases}
	1 & o_a=p_a$ and $O \in \Theta_L
	\\ 0 & o_a \neq p_a$ or $O \in \Theta_E\end{cases}
\end{equation}

Our results hold for \emph{any} utility function that satisfies Solution Preference.

\footnotetext{
	This is the weakest assumption that satisfies Solution Preference, 
	since it gives cheating agents the highest incentive to deviate.
	A utility assigning a lower value for failure than $o_a \neq p_a$
	would deter a cheating agent from deviating.
}

\begin{definition} [Expected Utility]
Let $r$ be a round in a specific execution of an algorithm. Let $a$ be an arbitrary agent. For each possible output vector $O$,
let $x_O(s,r)$ be the probability, estimated by agent $a$ at round $r$,
that $O$ is output by the algorithm if $a$ takes step $s$~\footnotemark,
and all other agents follow the algorithm.
The \emph{Expected Utility} $a$ estimates for step
$s$ in round $r$ of that specific execution is:
$$ \mathbb{E}_{s,r} [u_a]=\sum\limits_{O \in \Theta} x_O(s,r) u_a(O) $$
\end{definition}

\footnotetext{
	A step specifies the entire operation of the agent in a round.
	This may include drawing a random number, performing any internal computation,
	and the contents and timing of any message delivery.
}

Note that agents can also estimate the expected utility of \emph{other} agents by simply considering
a different utility function.

An agent will deviate whenever the deviating step leads to a strictly
higher expected utility than the expected utility of the next step of the algorithm.
By the utility function \ref{eq_u_a}, an agent will prefer \emph{any} deviating step that increases the probability of getting its preferred output, even if that deviating step also increases the risk of an erroneous output.

Let $\Lambda$ be an algorithm.
If by deviating from $\Lambda$ and taking step $s$,
the expected utility of $a$ is higher, we say that agent $a$ has an \emph{incentive to deviate} (i.e., cheat).
For example, at round $r$ algorithm $\Lambda$ may dictate that $a$ flips a fair binary coin
and sends the result to all of its neighbors.
Any other action by $a$ is considered a \emph{deviation}:
whether the message was not sent to all neighbors,
sent later than it should have, or whether the coin toss was not fair,
e.g., $a$ only sends $0$ instead of a random value. 
If no agent can unilaterally increase its expected utility by deviating from $\Lambda$,
we say that the protocol is an \emph{equilibrium}.
We assume a \emph{single} deviating agent, i.e., there are no coalitions of agents.

\begin{definition}[Distributed Equilibrium]
	Let $s(r)$ denote the next step of algorithm $\Lambda$ in round $r$.
	$\Lambda$ is an equilibrium if for any deviating step $s$,
	at any round $r$ of every possible execution of $\Lambda$:
	$$\forall{a,r,s}: \mathbb{E}_{s(r),r}[u_a] \geq \mathbb{E}_{s,r}[u_a]$$
\end{definition}

\subsection{Knowledge Sharing}
The Knowledge Sharing problem (adapted from \cite{Afek:2014:DCB:2611462.2611481}) is defined as follows:
\begin{enumerate}
	
	\item Each agent $a$ has a private input $i_a$, in addition to its $id$,
	and a function $q$, where $q$ is identical at all agents.
	
	\item A Knowledge Sharing protocol terminates \emph{legally} if all agents
	output the \emph{same} value, i.e., $\forall{a,b}: o_a=o_b \neq \bot$. 
	Thus the set $\Theta_L$ is defined as:
	$ O \in \Theta_L \iff \forall{a,b}: O(a) = O(b) \neq \bot$.
	
	\item A Knowledge Sharing protocol terminates \emph{correctly} (as described in Section~\ref{section_legal_correct}) if
	each agent outputs at the end the value $q(I)$
	over the input values $I=\{i_1,\dots,i_n\}$ of all other agents\footnotemark.
	\footnotetext{Notice that any output is legal as long as it is the output of all agents,
	but only a single output value is considered \emph{correct} for a given input vector.}

	\item The function $q$ satisfies the Full Knowledge property:
	\begin{definition}[Full Knowledge Property]
		\label{full_knowledge}
		A function $q$ fulfills the \emph{full knowledge} property if, 
		for each agent that does not know the input values of all other agents,
		any output in the range of $q$ is \emph{equally} possible.
		Formally, for any $1 \leq j \leq m$, fix $(x_1,\dots,x_{j-1},x_{j+1},\dots,x_m)$
		and denote $z_y = |\{x_j | q(x_1,\dots,x_j,\dots,x_m)=y\}|$.
		A function $q$ fulfills the \emph{full knowledge} property if, for any possible output $y$ in the range of $q$,
		$z_y$ is the same\footnotemark.
		
		\footnotetext{The definition assumes input values are drawn uniformly, otherwise the definition
			of $z_y$ can be expanded to the sum of probabilities over every input value for $x_j$.}
	\end{definition}
\end{enumerate}

We assume that each agent $a$ prefers a certain output value $p_a$.

\paragraph{$2$-Knowledge Sharing}
	The $2$-Knowledge Sharing problem is a Knowledge Sharing problem with exactly
	$2$ distinct possible output values.

\subsection{Coloring}
In the Coloring problem \cite{Cole:1986:DCT:10366.10368,Linial1986},
the output $o_a$ of each agent $a$ is a color.
$\Theta_L$ is any $O$ such that
$\forall{a} : o_a \neq \bot$ and  $\forall{(a,b) \in E} : o_a \neq o_b$.

We assume that every agent $a$ prefers a specific color $p_a$.

\section{Impossibility With No Knowledge}
\label{section_imp}

Here we show that without any a-priori knowledge about $n$,
there is no algorithm that is an equilibrium for both Knowledge Sharing and Coloring.

A fundamental building block in many algorithms is~\emph{Wake-Up} \cite{Afek:2014:DCB:2611462.2611481},
in which agents learn the graph topology. 
If $n$ is not known at all, how can an agent be sure the topology it learned is correct?

Let $a$ be a malicious agent with $\delta$ outgoing edges.
A possible deviation for $a$ is to simulate imaginary agents $a_1$, $a_2$
and to answer over some of its edges as $a_1$, and over the others as $a_2$,
as illustrated in Figure~\ref{figure:duplicated_node}.
From this point on $a$ acts as if it is 2 agents. 
Here we assume that the \emph{id}~space is much larger than $n$, 
allowing us to disregard the probability that the fake \emph{id} collides with an existing \emph{id}.

Note that an agent may be forced by the protocol to commit to its fake duplication early,
by starting the algorithm with a process that tries to map the graph topology,
such as the Wake-Up \cite{Afek:2014:DCB:2611462.2611481} algorithm.
If an algorithm does not begin by mapping the topology, an agent could begin the protocol as a single agent and duplicate itself at a later stage.
This allows the agent to "collect information" and increase its ability to effect the output.
Since a Wake-Up protocol can be added at the beginning of every algorithm,
we assume every algorithm starts by mapping the network,
thus forcing a duplicating agent to commit to its duplication scheme at the beginning of the algorithm.

\begin{figure}[H]
	\caption{Agent $a$ acting as separate agents $a_1$, $a_2$}	
	\label {figure:duplicated_node}
	\captionsetup{justification=centering}
	\includegraphics[scale=0.4]{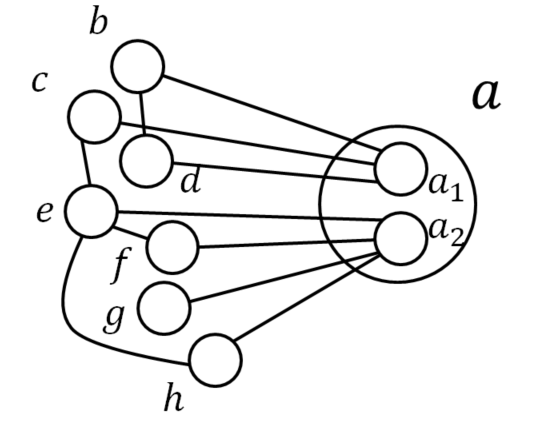}
	\centering
\end{figure}

Regarding the output vector, notice that an agent that pretends to be more than one agent
still outputs a \emph{single} output at the end.
The duplication causes agents to execute the algorithm as if it is executed on a graph $G'$ (with the duplicated agents) instead of the original graph $G$;
however, the output is considered legal if $O = \{o_a,o_b,\dots\} \in \Theta_L$
rather than if $\{o_{a_1},o_{a_2},o_b,\dots\} \in \Theta_L$.

It is important to emphasize that for any non-trivial distributed algorithm,
the outcome cannot be calculated using only private data without communication, i.e.,
for rational agents, no agent can calculate the outcome privately at the beginning of the algorithm.
This means that at round $0$,
for any agent $a$ and any step $s$ of the agent that does not necessarily result in algorithm failure,
it must hold that: $\mathbb{E}_{s,0}[u_v] \notin \{0,1\}$
(a value of $0$ means an agent will surely not get its preference,
and $1$ means it is guaranteed to get its preference).

In this section we label agents in graph $G$ as $a_1,...,a_n$, set in a clockwise manner in a ring and
arbitrarily in any other topology. These labels are not known to the agents themselves.


\subsection{Impossibility of Knowledge Sharing}
\label{know_share}


\begin{theorem}
	\label{theorem:KSgeneral}
	There is no algorithm for Knowledge Sharing
	that is an equilibrium in a $2$-connected graph when agents have no a-priori knowledge of $n$.
\end{theorem}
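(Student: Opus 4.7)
The plan is to argue by contradiction. Suppose $\Lambda$ is an equilibrium for Knowledge Sharing on a ring $G$ of $n$ agents (a $2$-connected graph) with no a-priori bound on $n$. I would first compute the expected utility of an honest agent $a$. By Solution Preference, $\Lambda$ terminates legally, and correctness forces the common output to be $q(I)$ on the real input vector $I$. Conditioning on $a$'s input $i_a$ and averaging over the uniformly distributed inputs of the other agents, the Full Knowledge property (Definition~\ref{full_knowledge}) implies that $q(I)$ is uniform over the range $Y$ of $q$. Thus $a$'s expected utility from the prescribed step of $\Lambda$ is exactly $1/|Y|$, which is strictly less than $1$.

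Next I would exhibit a strictly better deviation via duplication. Agent $a$ pretends to be two virtual agents $a_1,a_2$ placed as consecutive nodes in an apparent ring $G'$ of $n+1$ nodes, as in Figure~\ref{figure:duplicated_node}. Because no bound on $n$ is a-priori given, every honest agent's local view during $\Lambda$ on $G'$ is consistent with a genuine $(n+1)$-agent ring; no abort is triggered. Agent $a$ now occupies two input slots in the function $q$ evaluated by $\Lambda$ on $G'$. Its strategy is to play $a_1$ honestly with input $i_a$, while treating $i_{a_2}$ as a free parameter. Because $a_1$ and $a_2$ are the same physical agent, $a$ can pool their views of the protocol and thereby learn the inputs of the remaining $n-1$ honest agents before it is finally required to fix $i_{a_2}$. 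Applying Full Knowledge to $q$ on $n+1$ inputs, with $i_{a_1}$ and the others' inputs fixed, there exists $i_{a_2}^*$ such that $q$ evaluates to the preferred output $p_a$; agent $a$ commits to that value. Hence every honest agent outputs $p_a$, so $a$'s expected utility becomes $1 > 1/|Y|$, contradicting the equilibrium assumption.

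The main obstacle is justifying that $a$ really can defer the effective commitment of $i_{a_2}$ until after learning the other inputs. When $n$ is a-priori known, the equilibrium of~\cite{Afek:2014:DCB:2611462.2611481} blocks this with a commit-reveal schedule whose synchronization is timed against the known count $n$. When $n$ is unknown there is no count against which honest agents can verify that every commit has arrived, so $a$ can interleave $a_1$'s commitment with delayed participation by $a_2$ (e.g., by claiming $a_2$ has not yet received earlier messages) and act as a ``last mover'' on the $a_2$-slot. Any deadline $\Lambda$ attempts to enforce must be uniform over all possible values of $n$, and I would rule out each such rule by showing it is either undefined without $n$ or itself exploitable by scaling the number of duplicates $a$ introduces; assembling these cases gives the theorem.
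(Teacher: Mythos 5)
There is a genuine gap, and it sits exactly where you flagged it: the ``last mover'' step. Your deviation uses only two virtual agents, so the cheater occupies $2$ of the $n+1$ apparent nodes, i.e., it pretends to be at most $n$ agents. But the paper's own positive result (Theorem~\ref{theorem:ks-limit}, Algorithm~\ref{alg:appendix:ks-ring}) exhibits a ring algorithm that \emph{is} an equilibrium against precisely this class of cheaters: each agent's input is split into two XOR shares sent via \texttt{Secret-Transmit} to its clockwise neighbor and to the agent at distance $\lfloor n'/2 \rfloor$, with all reveals synchronized to round $n'$, where $n'$ is the \emph{apparent} ring size learned in Wake-Up. This refutes your closing claim that any deadline ``must be uniform over all possible values of $n$'' and is ``either undefined without $n$ or itself exploitable'': deadlines can be tied to $n'$, which is well defined with no a-priori knowledge of $n$, and against such a commit-reveal schedule a cheater controlling fewer than half of the apparent ring neither learns any honest input before round $n'$ nor can alter its own inputs afterwards, since each of its inputs is already held (as reconstructed shares) by at least one honest agent. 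So the pooling-of-views step (``learn the inputs of the remaining $n-1$ honest agents before fixing $i_{a_2}$'') is not just unproven, it is false for this algorithm, and your case analysis of ``deadline rules'' cannot be assembled as sketched.

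The missing idea is that the impossibility requires the cheater to impersonate \emph{more than} $n$ agents, i.e., more than half of the apparent network, which is only undetectable because no bound on $n$ is known. The paper does this by taking the apparent graph $H$ to consist of two parts $D$ and $E'=E\cup\{a_1,a_2\}$ and letting a single real agent attached to $D$ emulate all of $E'$; it then compares $t_D$ and $t_E$, the first rounds at which the \emph{collective} information of each side suffices to determine $q(I)$. Whatever the algorithm, one side's information suffices no later than the other's, so the cheater can choose to emulate that side: it then knows $q(I)$ at a round when the honest side's output still depends on the cheater's future messages, and (using Full Knowledge) deviating in those messages strictly increases its expected utility --- no ``forcing the output to exactly $p_a$'' is needed. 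To repair your proof you would have to replace the two-node duplication by an emulation of a majority-sized sub-network and replace the informal ``last mover'' argument by a timing argument of this kind (or an equivalent one showing that some coalition of at most half the apparent ring learns $q(I)$ strictly before the complementary part's information suffices); as written, the argument does not go through.
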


\begin{proof}
	Assume by contradiction that  $\Lambda$ is a Knowledge Sharing algorithm
	that is an equilibrium in any graph without knowing $n$. 
	Let $D$, $E$ be two $2$-connected graphs of rational agents.
	Consider the execution of $\Lambda$ on graph $H$ created by $D,E$, and adding two nodes $a_1,a_2$ and connecting
	these nodes to $1$ or more arbitrary nodes in both $D$ and $E$ (see Figure~\ref{figure:general_graph_clear}).

	\begin{figure}[H]
		\centering
		\begin{minipage}{0.45\textwidth}
			\centering
			\includegraphics[width=0.9\textwidth]{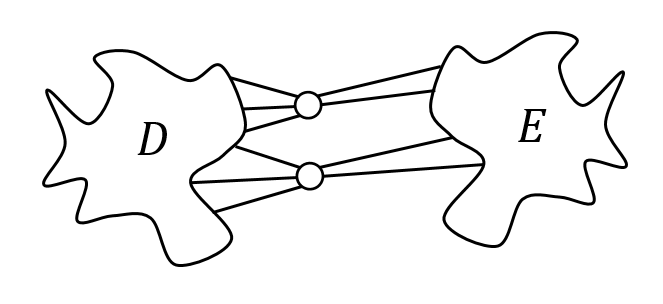} 
			\caption{Graph $H$ created by two arbitrary sub-graphs $D$,$E$}
			\label{figure:general_graph_clear}
		\end{minipage}\hfill
		\begin{minipage}{0.45\textwidth}
			\centering
			\includegraphics[width=0.9\textwidth]{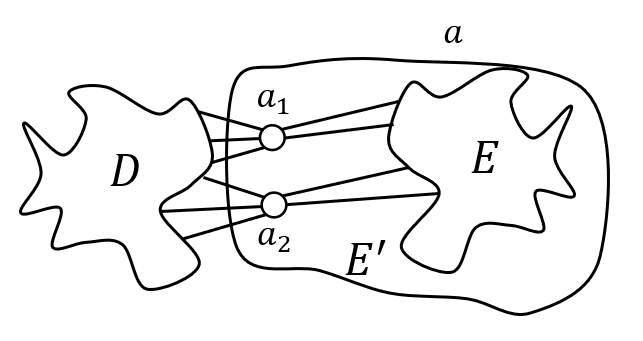}
			\caption{Example of agent $a$ pretending to be $E'=E \cup \{a_1,a_2\}$}
			\label{figure:general_graph_duplicated}
		\end{minipage}
	\end{figure}
	
	 
	Recall that the vector of agents' inputs is denoted by $I = i_1 , i _2 , \cdots , i_n$,
	and $n=|H|=|D|+|E|+2$.
	Let $t_D$ be the first round after which $q(I)$ can be calculated
	from the collective information that all agents in $D$ have\footnotemark,
	and similarly $t_E$ the first round after which $q(I)$ can be calculated in $E$.
	Consider the following three cases:
	
	\footnotetext{
		Regardless of the complexity of the computation.
	}
	
	\begin{enumerate}
		\item $\bm{t_E<t_D}$: $q(I)$ cannot yet be calculated in $D$ at round $t_E$.
		Let $E'=E \cup \{a_1,a_2\}$.
		Since $E \subset E'$,
		the collective information in $E'$ at round $t_E$ is enough to calculate $q(I)$.
		Since $n$ is not known, an agent $a$ could emulate the behavior of $E'$,
		making the agents believe the algorithm runs on $H$ rather than $D$.
		In this case, this cheating agent knows at round $t_E$ the value of $q(I)$ in this execution,
		but the collective information of agents in $D$ is not enough to calculate $q(I)$,
		which means the output of agents in $D$ still depends on messages from $E'$, the cheater.
		Thus, if $a$ learns that the output $q(I) \neq p_a$,
		it can send messages that may cause the agents in $D$ to decide a value $x \neq q(I)$.
		In the case where $p_a=x$, agent $a$ increases its expected utility by sending
		a set of messages different than that decreed by the protocol.
		Thus, agent $a$ has an incentive to deviate, contradicting distributed equilibrium.
		\label{case:t_great_full_graph}
		
		\item $\bm{t_D=t_E}$: both $E$ and $D$ have enough collective information to calculate $q(I)$
		at the same round.
		The collective information in $E$ at round $t_E$ already exists in $E'$ at round $t_E-1$.
		Since $t_D=t_E$, the collective information in $D$ is not enough to calculate $q(I)$ in round $t_E-1$.
		Thus, similarly to Case~\ref{case:t_great_full_graph}, $a$ can emulate $E'$ and has an incentive to deviate.
		
		\item $\bm{t_E>t_D}$: Symmetric to Case~\ref{case:t_great_full_graph}.
	\end{enumerate}
	
	Thus, $\Lambda$ is not an equilibrium for the Knowledge Sharing problem.
\end{proof} 

When $|D|=|E|$, the proof of Theorem~\ref{theorem:KSgeneral} brings us to the following corollary:
\begin{corollary}
	\label{cor:ks-limit}
	When a cheating agent pretends to be more than $n$ agents,
	there is no algorithm for Knowledge Sharing that is an equilibrium
	when agents have no a-priori knowledge of $n$.
\end{corollary}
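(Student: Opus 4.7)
The plan is to specialize the construction from the proof of Theorem~\ref{theorem:KSgeneral} to the balanced case $|D| = |E|$ and simply read off the number of fabricated identities. In that construction the real network is taken to be $D$, of size $n = |D|$, while the cheating agent $a$ emulates the entire extension $E' = E \cup \{a_1, a_2\}$, causing the honest agents in $D$ to believe they are participating in the larger graph $H$. Thus the cheater is pretending to be $|E'| = |E| + 2$ agents beyond what actually exists in $D$.

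With $|D| = |E|$, this count is $|E| + 2 = n + 2$, which is strictly greater than $n$; hence this setup realises a cheater pretending to be more than $n$ agents. The three-case analysis in the proof of Theorem~\ref{theorem:KSgeneral}, comparing the first round $t_D$ (respectively $t_E$) after which $q(I)$ can be computed from the collective information of the agents in $D$ (respectively $E$), never used $|D| \neq |E|$ and therefore applies unchanged in each of the cases $t_E < t_D$, $t_E = t_D$, $t_E > t_D$. In each case the cheater learns the value $q(I)$ strictly before the agents in $D$ have enough information to determine it, and can therefore steer the output of $D$ toward its preferred value $p_a$ whenever the honest outcome $q(I)$ would differ from $p_a$. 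This yields a strictly profitable deviation, contradicting the assumption that $\Lambda$ is a distributed equilibrium.

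There is no substantive new obstacle — the corollary is a direct specialisation of the theorem — but one item deserves care, namely the accounting of fake identities: the two extra nodes $a_1, a_2$ contribute the slack that pushes $|E'|$ past $n$ in the balanced regime $|D| = |E|$, so the conclusion is genuinely ``more than $n$'' and not merely ``at least $n$''. Beyond that, it must be checked that $D$, $E$, and the resulting $H$ can all be chosen to be $2$-connected with $|D| = |E|$, which is easy for every sufficiently large $n$, so the structural assumptions of the model are preserved throughout the construction.
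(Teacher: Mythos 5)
Your proposal follows exactly the paper's route: the corollary is obtained by specializing the construction of Theorem~\ref{theorem:KSgeneral} to the balanced case $|D|=|E|$, and your three-case analysis is the theorem's argument applied unchanged. One correction to the bookkeeping, though: the cheater $a$ is itself one of the real agents, so the actual network is $D\cup\{a\}$ and $n=|D|+1$, not $n=|D|$; hence $a$ pretends to be $|E'|=|E|+2=n+1$ identities, not $n+2$. The conclusion ``more than $n$'' is unaffected, but the exact count $n+1$ is what makes the corollary tight against Theorem~\ref{theorem:ks-limit} (which gives an equilibrium whenever no cheater pretends to be more than $n$ agents) and what justifies its later use for every $d>n$ in the proof of the $(2\alpha-2)$-bound.
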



\subsection{Impossibility of Coloring in a Ring}

The proof of Theorem~\ref{theorem:KSgeneral} relies on the Full Knowledge property of the Knowledge Sharing
problem, i.e., no agent can calculate the output before knowing all the inputs.
Recall that the Coloring problem, however, is a more local problem \cite{Nati:Locality}, and
nodes may color themselves without knowing anything about distant nodes.

\begin{theorem}
	\label{th_color_imp}
	There is no algorithm for Coloring that is an equilibrium in a $2$-connected graph
	when agents have no a-priori knowledge of $n$.
\end{theorem}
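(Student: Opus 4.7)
The plan is to adapt the duplication idea from Theorem~\ref{theorem:KSgeneral} to the local setting of Coloring. Assume for contradiction that $\Lambda$ is a Coloring algorithm that is an equilibrium without any a-priori knowledge of $n$. I would instantiate $\Lambda$ on a ring $G$ of $n$ agents $v_0,\ldots,v_{n-1}$, take $a=v_0$ with neighbors $u=v_{n-1}$ and $w=v_1$, and arrange the preferences so that at least one of $u,w$ also prefers $p_a$. Since a valid $3$-coloring forces adjacent agents to receive distinct colors, $a$'s honest expected utility in this configuration is some $q<1$.

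I would then study the deviation in which $a$ duplicates itself into a chain $a_1,\ldots,a_k$ spliced between $u$ and $w$, with $a_1$ speaking on the real edge to $u$ and $a_k$ on the real edge to $w$. Since no bound on $n$ is known, the other agents see only a legal ring of size $n+k-1$ and cannot detect the deception. The cheater picks the ids, inputs, and random coins of the fake agents, internally simulates $\Lambda$ on the augmented ring $H$, and emits the resulting messages over its two real edges. For its own output in $G$ to be legal, the cheater's color must differ from both $c_H(u)$ and $c_H(w)$; the deviation succeeds iff the cheater can output exactly $p_a$.

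The crux of the proof is to show that some choice of $k$ and of the fake parameters yields $c_H(u)\neq p_a$ and $c_H(w)\neq p_a$. The intended argument is that $\Lambda$'s output depends only on ids, inputs, and randomness, not on preferences, while the only $H$-neighbors of $u,w$ that the cheater controls, namely $a_1$ and $a_k$, can be parameterized independently. If no choice of $a_1$'s id ever moved $c_H(u)$ off $p_a$, then $u$'s color would be fixed by $\Lambda$ regardless of its neighbor's id; but then one could exhibit a genuine ring on which $\Lambda$ is forced to assign $u$ that same color while the rest of the ring demands a different coloring, contradicting correctness. The same reasoning applies to $a_k$ and $w$. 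Hence the cheater can force $c_H(u),c_H(w)\neq p_a$, output $p_a$, and achieve utility $1>q$, contradicting equilibrium.

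The main obstacle I foresee is the middle step: rigorously arguing that varying the fake id of the simulated neighbor of $u$ (respectively $w$) suffices to move its color off any prescribed value, invoking only correctness of $\Lambda$ and without assuming anything about its internal structure. Ruling out a pathological $\Lambda$ that somehow locks $c_H(u)=p_a$ across all fake-id choices requires carefully exploiting how $\Lambda$ must behave on related genuine rings.
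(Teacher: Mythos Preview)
Your plan diverges substantially from the paper's argument, and the obstacle you flag is a genuine gap rather than a detail to be filled in.

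The paper does not try to show that a duplicating cheater can \emph{force} its real neighbours off colour $p_a$ by clever choice of fake ids. Instead it runs an information--timing argument parallel to the Knowledge Sharing proof. It fixes a ring containing a block of $k\ge 3$ consecutive agents that all prefer the same colour $p$, centred at some agent $a_n$, and introduces a notion of \emph{Group Expected Utility}: a group $Y$ ``knows'' $u_{a_n}$ at round $r$ if, conditioned on the messages crossing the boundary of $Y$ up to round $r$, the event $o_{a_n}=p$ already has probability $0$ or $1$. Splitting the remaining agents into a left half $L$ and a right half $R$, the paper does a case analysis on which of $L,R$ first knows $u_{a_n}$ (or neither ever does). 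In each case the ``earlier'' half, enlarged by the two boundary agents and emulated by a single cheater, learns whether $a_n$ wins one round before the other side is committed, and can therefore deviate profitably. No claim about steering a specific neighbour to a specific colour is ever needed.

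Your approach would need exactly that claim, and the sketch you give for it does not go through. First, $c_H(u)$ and $c_H(w)$ are not decoupled: they are outputs of the same execution of $\Lambda$ on the same ring $H$, so a choice of fake parameters that pushes $c_H(u)$ off $p_a$ may simultaneously push $c_H(w)$ onto $p_a$. Treating the two constraints independently (``the same reasoning applies to $a_k$ and $w$'') is not justified. Second, the cheater must commit to $k$, the fake ids, and (under the paper's Wake-Up convention) the entire duplication scheme before seeing the honest agents' random coins; since $\Lambda$ may be randomised, $c_H(u)$ is a random variable the cheater cannot simply set. Third, your proposed contradiction (``exhibit a genuine ring on which $\Lambda$ is forced to assign $u$ that same colour while the rest of the ring demands a different coloring'') does not work: proper colourings are far from unique, so there is no configuration that ``demands'' $u$ receive a colour other than $p_a$; correctness of $\Lambda$ is not violated merely because $u$ gets $p_a$ in many instances. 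Without a replacement for this step you have not exhibited a profitable deviation, so the contradiction with equilibrium is not established.
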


\begin{proof}
	In order to show an incentive to deviate, we generalize the notion of \emph{expected utility}.
	Recall that an agent outputs a \emph{single} color, even if it pretends to be several agents.
	In Coloring, a cheating agent only wishes to influence the output color of its \emph{original neighbors}
	to enable it to output its preferred color while maintaining the legality of the output.

	\begin{definition} [Group Expected Utility]
		Let $r$ be a round in an execution $\varepsilon$, and let $M$ be a group of agents.
		For any set $S=\{s_1,\dots,s_{|M|}\}$ of steps of agents in $M$,
		let $\Psi$ be the set of all possible executions for which the same messages traverse the links
		that income and outgo to/from $M$ as in $\varepsilon$ until round $r$,
		and in round $r$ each agent in $M$ takes the corresponding step in $S$.
		For each possible output vector $O$, let $x_O$ be the sum of probabilities over $\Psi$
		that $O$ is decided by the protocol.
		For any agent $v$, the \emph{Group Expected Utility} of $u_v$ by $M$ taking steps $S$ at round $r$
		in execution $\varepsilon$ is:
		$ \mathbb{E}_{M,S,r} [u_v]=\sum\limits_{O \in \Theta} x_O u_v(O) $.
	\end{definition}
	
	Assume by contradiction that $\Gamma$ is a Coloring algorithm that
	is an equilibrium in a ring with $n$ agents $\{a_1,\dots,a_n\}$.
	Let $G$ be a ring with a segment of $k$ consecutive agents, $k \geq 3$,
	all of which have the same color preference $p$. 
	Assume w.l.o.g., they are centered around $a_n$ if $k$ is odd and around $a_n,a_1$ if even.
	Let $L$ be the group of agents $\{ a_{n-1},\dots, a_{\lfloor \frac{n}{2}\rfloor + 1}\}$,
	and $R$ the group of agents $\{ a_1, \dots , a_{\lceil \frac{n}{2} \rceil - 1}\}$. 
	Denote $L' = L \cup \{a_{\lceil \frac{n}{2} \rceil}, a_n\}$ and $R' = V\setminus L'$
	(see Figures~\ref{figure:coloring_ring}, and~\ref{figure:coloring_ring_extended}).
		
	\begin{figure}[H]
		\centering
		\begin{minipage}{0.45\textwidth}
			\centering
			\includegraphics[width=0.9\textwidth]{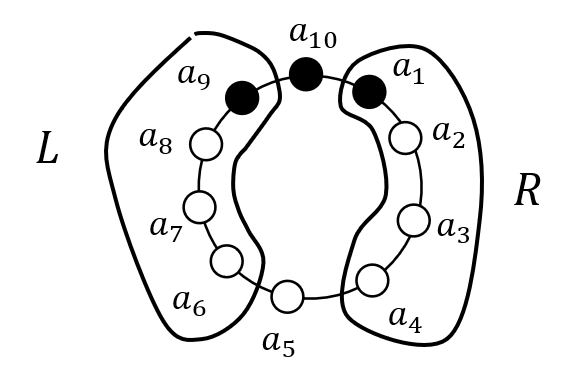} 
			\caption{Ring with 3 colliding agents $a_9, a_{10}, a_1$ with groups $L$, $R$}
			\label{figure:coloring_ring}
		\end{minipage}\hfill
		\begin{minipage}{0.45\textwidth}
			\centering
			\includegraphics[width=0.9\textwidth]{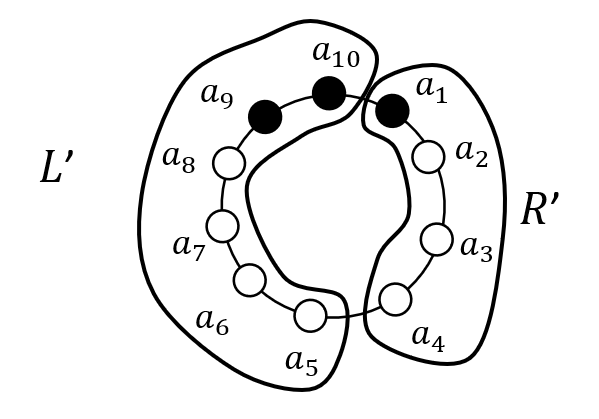}
			\caption{Ring with 3 colliding agents with groups $L'$, $R'$}
			\label{figure:coloring_ring_extended}
		\end{minipage}
	\end{figure}

	\begin{definition}
		Let $Y$ be a group of agents (e.g., $L$ or $R$). In any round $r$ in an execution,
		let $S^r(Y)$ denote the vector of steps of agents in $Y$ according to the protocol.
		We say $Y$ \emph{knows} the utility of agent $a$ if it holds that
		$\mathbb{E}_{Y,S^r(Y)}[u_{a}] \in \{0,1\}$.
		We say $Y$ \emph{does not know} the utility of agent $a$ if $0<\mathbb{E}_{Y,S^r(Y)}[u_{a}]<1$.	
	\end{definition}
	
	Recall that at round $0$ no agent (or group of agents) knows its utility or the utility of any other agent.
	Consider an execution of $\Gamma$ on ring $G$ and the groups $L,R$ in the following cases:

	\begin{enumerate}
	
	\item $L$ does not know $u_{a_n}$ throughout the entire execution of the algorithm,
	i.e., for agents in $L$ it holds that $0 < Pr[o_n \neq p] < 1$.
	Then if $L$ is emulated by a cheating agent,
	it has an incentive to deviate and set its output to $p$ (as otherwise its utility is guaranteed to be $0$).
	

	\item $L$ knows $u_{a_n}$ at some round $t_L$,
	and $R$ does not know $u_{a_n}$ before round $t_L$.
	Consider round $t_L-1$ and group $L'$:
	In round $t_L$, $L$ knows the utility of $a_n$, 
	thus the collective information of agents in $L$ at round $t_L$ already exists in $L'$ at round $t_L-1$.
	If $L'$ knows that $u_{a_n}=1$, then it had already won; otherwise, $L'$ knows that $u_{a_n}=0$.
	Consider the group $R' \subset R$, that does not know $u_{a_n}$ at round $t_L-1$. 
	If $L'$ is emulated by a cheating agent $a$,
	it can send messages that increase its probability to output $p$, thus increasing its expected utility.
	\label{coloring:case_L_R}
	
	\item $R$ knows $u_{a_n}$ before round $t_L$: symmetric to Case~\ref{coloring:case_L_R}.

	\end{enumerate}

	By the contradictory example for a ring, there is no equilibrium for Coloring for $2$-connected graphs,
	thus $\Gamma$ is not an equilibrium for the Coloring problem.
\end{proof}

\section{Algorithms}
\label{section_alg}

Here we present algorithms for Knowledge Sharing (Section~\ref{section:alg:ks})
and Coloring (Section~\ref{section:alg:coloring}).
The Knowledge Sharing algorithm is an equilibrium in a ring
when no cheating agent pretends to be more than $n$ agents.
The Coloring algorithm is an equilibrium in any $2$-connected graph
when agents a-priori know $n$.

Using an algorithm as a subroutine is not trivial in this setting,
even if the algorithm is an equilibrium,
as the new context as a subroutine may allow agents to deviate
towards a different objective than was originally proven.
Thus, whenever a subroutine is used, its equilibrium should be justified.

The full descriptions and proofs of the algorithms and Theorem~\ref{theorem:ks-limit}
can be found in Appendix~\ref{appendix_algs},
in addition to a Coloring algorithm with improved time complexity.

\subsection{Knowledge Sharing in a Ring}
\label{section:alg:ks}

First we describe the \texttt{Secret-Transmit} building block in which agent $a$ delivers its input $i_a$
at round $r$ to some agent $b$,
and no other agent in the ring learns any information about this input.

Agent $a$ selects a random number $R$ and let $X=R \oplus i_a$.
It then sends $R$ clockwise and $X$ counter-clockwise until each reaches the agent before $b$.
At round $r-1$, each neighbor of $b$ simultaneously sends $b$ the value it received, either $X$ or $R$.

We assume a global orientation around the ring.
This assumption can be easily relaxed via Leader Election \cite{Afek:2014:DCB:2611462.2611481},
which is an equilibrium in this application since the orientation
has no effect on the output.
The algorithm works as follows:

All agents execute Wake-Up \cite{Afek:2014:DCB:2611462.2611481} to learn $n'$,
the size of the ring which may include duplications.
For each agent $a$, denote $b^1_a$ the clockwise neighbor of $a$,
and $b^2_a$ the agent at distance $\lfloor \frac{n'}{2} \rfloor$ counter-clockwise from $a$.
All agents around the ring simultaneously use \texttt{Secret-Transmit},
each to transmit its input secretly to its corresponding $b^1$ and $b^2$
so all arrive at their destination at the same round $r=n'$.
At round $n'+1$, each agent sends its input around the ring.

\begin{theorem}
	\label{theorem:ks-limit}
	In a ring, the algorithm above is an equilibrium
	when no cheating agent pretends to be more than $n$ agents.
\end{theorem}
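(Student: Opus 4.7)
My plan is to show that when a cheater $c$ pretends to be $k \leq n$ contiguous agents in the apparent ring of size $n' = n + k - 1$, no unilateral deviation improves $c$'s expected utility over truthful play. I would organize the argument into three steps.

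\textbf{Step 1 (secrecy of Secret-Transmit).} First I would verify that for any honest $a$ and target $b \in \{b^1_a, b^2_a\}$, the two XOR-shares $R$ and $X = R \oplus i_a$ travel along disjoint arcs from $a$ to $b$; recovering $i_a$ requires intercepting both. Because the cheater's duplicates form a contiguous segment and both endpoints $a, b$ are honest (hence outside the span), the span fits inside a single arc and intercepts at most one of the shares. Consequently $c$ cannot recover any honest $i_a$ from messages in transit; the only way $c$ learns such an $i_a$ is by being $b^1_a$ or $b^2_a$ itself, in which case it obtains $i_a$ only at the common delivery round $r = n'$.

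\textbf{Step 2 (commitment timing).} Next I would pin down when $c$ is locked into its own inputs $i_{c_0}, \dots, i_{c_{k-1}}$. Each duplicate $c_i$ must emit its first Secret-Transmit share by round $\lceil n'/2 \rceil$ at the latest. I would then check by direct case analysis on $i$ that for every $c_i$ at least one of $b^1_{c_i}, b^2_{c_i}$ is an \emph{honest} agent, using $k \leq n$ together with the fact that $b^1$ and $b^2$ lie on opposite sides of the span. Combined with Step 1, $c$'s $k$ committed values are therefore recorded with a truthful witness before round $n'$, i.e.\ before any honest input can reach $c$. Any subsequent lie in the broadcast, alteration of a transiting honest $R$/$X$-message, or delayed/missing message is either detected (triggering an abort, forbidden by Solution Preference) or does not change what has already been committed.

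\textbf{Step 3 (Full Knowledge reduction).} With $c$'s committed values fixed before any honest input is revealed, I would invoke the Full Knowledge property of $q$: fixing all but one input and varying it uniformly yields a uniform distribution over the range of $q$. Hence from $c$'s viewpoint the output $q(I)$ is uniform over the range of $q$, and $\Pr[q(I) = p_c]$ matches the probability obtained by honest play. No deviation strictly increases $c$'s expected utility, so the protocol is an equilibrium.

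The main obstacle I anticipate is Step 2 in the extreme case $k = n$, where $\lfloor n'/2 \rfloor = n - 1$ and the cheater covers more than half of the apparent ring. There the guarantee that every $c_i$ has an honest $b^1$- or $b^2$-receiver becomes tight: for the duplicates at the two ends of the span exactly one of $b^1_{c_i}, b^2_{c_i}$ is honest, and I would have to check carefully that the corresponding share is actually transmitted strictly before round $n'$. A clean enumeration over the $k$ positions inside the span, exploiting that the honest-witness receiver alternates between $b^1$ and $b^2$ at the two boundaries, should close this case.
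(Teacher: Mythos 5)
There is a genuine gap, and it is in Step~1 (and hence in the premise of Step~3), not in the Step~2 boundary case you worry about. Your blanket secrecy claim --- that the cheater can learn an honest input $i_a$ only by being the target itself and then only at round $n'$ --- is false. If an honest agent's far target $b^2_a$ is a duplicate lying in the \emph{interior} of the cheater's span, then \emph{both} neighbors of $b^2_a$ are duplicates, and by the construction of \texttt{Secret-Transmit} those two neighbors hold the shares $R$ and $X$ from roughly round $n'/2$ until round $n'-1$; the cheater therefore reconstructs $i_a$ well before round $n'$. (Concretely, with $n=10$, $d=n$, $n'=19$ and duplicates $a_1,\dots,a_{10}$, the honest agent $a_{13}$ transmits to $b^2_{a_{13}}=a_4$, whose shares sit at the duplicates $a_3$ and $a_5$ from round $9$.) Consequently Step~3's premise ``before any honest input is revealed'' does not hold, and your reduction to Full Knowledge does not go through as written.

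What the theorem actually needs, and what the paper proves, is the weaker statement that \emph{at least one} honest input remains unknown to the cheater until round $n'$, while (as in your Step~2) every duplicate's input is by then committed to at least one honest agent. The paper exhibits this input explicitly: it is the honest agent adjacent to the span (call it $a_{n'}$, with duplicates $a_1,\dots,a_d$). For its transmission to $b^1_{a_{n'}}=a_1$, the share $R$ is held by $a_{n'}$ itself (honest) until round $n'-1$; for its transmission to $b^2_{a_{n'}}=a_c$ with $c\ge d$, the share $X$ is held until round $n'-1$ by $a_{c+1}$, which is honest. So the cheater gets at most one share of each transmission before round $n'$, and by Full Knowledge it is indifferent about everything it sends up to round $n'-1$; from round $n'$ on its inputs are already witnessed by honest agents, so any lie is detected. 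Your ``both endpoints honest'' case cannot substitute for this argument: in the extremal case $d=n$ every honest agent has a duplicate as its $b^2$ target, so that case never applies, and your case where the target is a duplicate is exactly the one stated incorrectly. Repairing the proof means replacing the universal secrecy claim of Step~1 by this existential, boundary-agent argument.
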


\subsection{Coloring}
\label{section:alg:coloring}

\begin{enumerate}
	\item All agents execute Renaming \cite{Afek:2014:DCB:2611462.2611481}
	which gives new names $1,\dots,n$ to the agents.
	Since agents strive to minimize their name's numeral value, Renaming is still an equilibrium.
	\item Each agent, in order of the new names, picks its preferred color if available,
	or the minimal available color otherwise, and sends its color to all of its neighbors.
\end{enumerate}

\section{How Much Knowledge Is Necessary?}
\label{section_approx}


Here we examine the effects of a-priori knowledge that \emph{bound} the possible value of $n$. 
We show that the possibility of algorithms that are equilibria depends on the range $[\alpha,\beta]$ in which $n$ might be, and show these ranges for different problems.

Table~\ref{tbl:summary} summarizes our results.
Partition and Orientation have equilibria without any knowledge of $n$;
however, the former is constrained to even-sized rings,
and the latter is a trivial problem in distributed computing
(radius $1$ in the \textit{LOCAL} model \cite{Linial:1987:DGA:1382440.1382990}).

\begin{definition}[$(\alpha,\beta)$-Knowledge]
	\label{def_alphabetaknowledge}
	We say agents have \emph{$(\alpha,\beta)$-Knowledge} about the actual number of agents $n$,
	$\alpha \leq \beta$, if all agents know that the value of $n$ is in $[\alpha, \beta]$.
	We assume agents have no information about the distribution over $[\alpha,\beta]$,
	i.e., they assume it is uniform.
\end{definition}

\begin{definition} [$f$-Bound]
	Let $f:\mathbb{N} \rightarrow \mathbb{N}$. A distributed computing problem $\mathbb{P}$ is \emph{$f$-bound} if: 
	\begin{itemize}
		\item There exists an algorithm for $\mathbb{P}$ that is an equilibrium 
		given $(\alpha,\beta)$-Knowledge for any $\alpha,\beta$ such that $ \beta \leq f(\alpha)$. 
		
		
		\item For any algorithm for $\mathbb{P}$, there exist $\alpha,\beta$ where $\beta > f(\alpha)$
		such that given $(\alpha,\beta)$-Knowledge the algorithm is not an equilibrium.
	\end{itemize}
\end{definition}

In other words, a problem is $f$-bound if given $(\alpha,\beta)$-Knowledge,
there is an equilibrium when $\beta \leq f(\alpha)$,
and there is no equilibrium when $\beta > f(\alpha)$. 
A problem is $\infty$-bound if there is an equilibrium given \emph{any} bound $f$,
but there is no equilibrium with $(1,\infty)$-Knowledge.
A problem is \emph{unbounded} if there is an equilibrium with $(1,\infty)$-Knowledge.

\begin{table}[h]
	\centering
	\begin{tabular}{|c|c|}
		\hline
		Bound & Problem (in a ring)  \\ \hline
		$\alpha+1$ & Leader Election\footnotemark \\ \hline
		$2\alpha-2$ & Knowledge Sharing \\ \hline
		$\infty$ & Coloring, $2$-Knowledge Sharing \\ \hline
		$unbounded$ & Partition, Orientation\textsuperscript{\ref{fn:general_graph}} \\ \hline
	\end{tabular}
	\caption{Knowledge Bounds in a Ring; summary of results}
	\label{tbl:summary}
\end{table}

\footnotetext{\label{fn:general_graph} These results hold in general graphs, as well.}

Consider an agent $a$ at the start of a protocol given $(\alpha,\beta)$-Knowledge. 
If $a$ pretends to be a group of $d$ agents, it can be caught when $d + n - 1 > \beta$,
since agents might discover the number of agents and catch the cheater.
Moreover, \emph{any} duplication now involves some risk
since the actual value of $n$ is not known to the cheater.

An arbitrary cheating agent $a$ simulates executions of the algorithm for every possible duplication,
and evaluates its expected utility. 
Denote $D$ a duplication scheme in which an agent pretends to be $d$ agents.
Let $P_D = P[d + n - 1 \leq \beta]$ be the probability,
from agent $a$'s perspective, 
that the overall size does not exceed $\beta$. 
If for agent $a$ there exists a duplication scheme $D$ at round $0$ such that
$ \mathbb{E}_{D,0}[u_a] \cdot P_D  > \mathbb{E}_{s(0),0}[u_a]$,
then agent $a$ has an incentive to deviate and duplicate itself.

The proofs of the following theorems and corollaries
can be found in Appendix~\ref{appendix_bounds}.
For each problem we look for the maximal range of $\alpha,\beta$
where no $d$ exists that satisfies the inequation above.

\paragraph{Knowledge Sharing}
\label{class_KS}	

\begin{theorem}
	\label{theorem:ks-bound}
	Knowledge Sharing is $(2\alpha-2)$-bound.
\end{theorem}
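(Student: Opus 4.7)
The plan is to prove the two conditions of the $f$-bound definition: that an equilibrium exists whenever $\beta \leq 2\alpha - 2$, and that no equilibrium exists once $\beta \geq 2\alpha - 1$.

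For the equilibrium direction, I would take the ring Knowledge Sharing algorithm of Section~\ref{section:alg:ks} and prepend it with a Wake-Up stage letting every agent learn the perceived ring size $n'$; if $n' \notin [\alpha,\beta]$ the agents abort. A cheating agent that assumes $d$ identities presents a perceived ring of size $n' = n + d - 1$, and by Theorem~\ref{theorem:ks-limit} no such agent can gain so long as $d \leq n$. Hence a profitable deviation must both (a) evade detection, i.e.\ $n + d - 1 \leq \beta$, and (b) emulate a subgraph strictly larger than the honest side so as to enable manipulation in the sense of Theorem~\ref{theorem:KSgeneral}/Corollary~\ref{cor:ks-limit}, which gives $d \geq n$. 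Using $n \geq \alpha$ in (a) yields $d \leq \beta - \alpha + 1$, while (b) yields $d \geq \alpha$; these are simultaneously satisfiable only if $\alpha \leq \beta - \alpha + 1$, i.e.\ $\beta \geq 2\alpha - 1$, which is excluded by $\beta \leq 2\alpha - 2$. No $d$ is therefore useful and the protocol is an equilibrium.

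For the impossibility direction, I would fix an arbitrary algorithm and describe a universally profitable deviation once $\beta \geq 2\alpha - 1$. The cheater picks $d = \alpha$ and emulates a subgraph $E'$ of size $\alpha$ (with $|D| = n - 1$) exactly as in the proof of Theorem~\ref{theorem:KSgeneral}. Under the uniform prior over $[\alpha,\beta]$ the event $n = \alpha$ has probability $\frac{1}{\beta - \alpha + 1}$; on this event the perceived size $n + d - 1 = 2\alpha - 1 \leq \beta$ is within the advertised bounds so detection fails, and $|E'| = \alpha > \alpha - 1 = |D|$, so the construction of Theorem~\ref{theorem:KSgeneral} lets the cheater learn $q(I)$ first and force the output to $p_a$. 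On the complementary event $n > \alpha$ the cheater is caught and earns $0$ by Solution Preference. Thus the cheater's expected utility is at least $\frac{1}{\beta - \alpha + 1}$, which beats the honest expected utility $\frac{1}{K}$ guaranteed by the Full Knowledge property for any sufficiently large output range $K$ (the complementary regime $K = 2$ being handled separately as the $2$-Knowledge Sharing result).

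The main obstacle is nailing down the exact inequality $d \geq n$ (rather than $d > n$) under which the Section~\ref{section_imp} construction succeeds; the boundary between these two is precisely what separates the $(2\alpha - 2)$-bound from a $(2\alpha - 1)$-bound. A secondary obstacle is verifying that the deviation in the impossibility part produces a \emph{legal} output vector (so that Solution Preference does not collapse the utility to $0$) on the event $n = \alpha$, by arguing that the cheater's own output can be set to match the value it forces on the honest side.
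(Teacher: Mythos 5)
Your overall architecture matches the paper's: the equilibrium side rests on the ring algorithm and Theorem~\ref{theorem:ks-limit} (a cheater pretending to be at most $n$ agents gains nothing) plus detection when the perceived size leaves $[\alpha,\beta]$, and the impossibility side invokes the duplication construction of Theorem~\ref{theorem:KSgeneral}/Corollary~\ref{cor:ks-limit} weighted by the uniform prior on $n$. Your equilibrium direction is sound (and is essentially the large-$k$ degenerate case of the paper's explicit expected-utility optimization over $d$, which the paper carries out with general $k$ and payoff $X$ so as to also extract the $2$-Knowledge Sharing corollary): since a profitable undetected deviation would need $d\geq\alpha$ and $d\leq\beta-\alpha+1$, no such $d$ exists when $\beta\leq 2\alpha-2$, and this only becomes easier under the correct, stricter requirement $d>n$.

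The genuine gap is in your impossibility direction. You take $d=\alpha$ and argue that on the event $n=\alpha$ the cheater profits because $|E'|=\alpha>|D|=\alpha-1$; but this is a deviation in which the cheater pretends to be exactly $n$ agents, and that case is precisely what Theorem~\ref{theorem:ks-limit} declares harmless (the algorithm is an equilibrium whenever $d\leq n$, including $d=n$). Corollary~\ref{cor:ks-limit} only gives an advantage when the cheater pretends to be \emph{more} than $n$ agents ($d\geq n+1$), because in the construction the emulated side is $E\cup\{a_1,a_2\}$ with $|E|=|D|$, so $d=n+1$. Consequently your claimed universally profitable deviation at $\beta=2\alpha-1$ contradicts the paper's own equilibrium result at that range, and the ``obstacle'' you flag ($d\geq n$ versus $d>n$) is not a boundary technicality you may resolve in your favor --- it resolves against you. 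The repair is easy and is what the paper does: use $d\geq n+1$, so undetected cheating needs $n+d-1\geq 2n\leq\beta$, which has positive probability once $\beta\geq 2\alpha$; since the $f$-bound definition only requires \emph{some} pair $\alpha,\beta$ with $\beta>2\alpha-2$ admitting no equilibrium, exhibiting $\beta\geq 2\alpha$ (with the expected-gain-versus-capture comparison you already sketch, for large output range $k$) suffices, and the stronger claim of impossibility for every $\beta\geq 2\alpha-1$ should simply be dropped.
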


\begin{corollary}
	\label{cor:k=2}
	$2$-Knowledge Sharing is $\infty$-bound.
\end{corollary}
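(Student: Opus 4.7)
The statement splits into two halves: (i) for every finite $\alpha \le \beta$ there is an equilibrium for $2$-Knowledge Sharing given $(\alpha,\beta)$-Knowledge, and (ii) under $(1,\infty)$-Knowledge no equilibrium exists. Part~(ii) follows immediately from Theorem~\ref{theorem:KSgeneral}, since a $2$-Knowledge Sharing instance is a Knowledge Sharing instance with the Full Knowledge property, and that impossibility proof places no lower bound on the size of the range of $q$. So the real work is the positive direction.

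For part~(i), my plan is to show that the ring algorithm of Section~\ref{section:alg:ks} is already an equilibrium for $2$-Knowledge Sharing under any $(\alpha,\beta)$-Knowledge. The engine is an \emph{amplification} of the Full Knowledge property, specific to $k=2$: for any subset $S$ of input coordinates and any fixing $x_S$, the conditional distribution of $q(x_S, X_{S^c})$ under the uniform prior on $X_{S^c}$ is $\tfrac12$--$\tfrac12$. I would prove this by induction on $|S^c|$. The base case $|S^c|=1$ is exactly the Full Knowledge hypothesis. In the inductive step, peel off one unfixed coordinate $X_j$ and apply the tower rule: $\Pr[q=0 \mid X_S=x_S] = \mathbb{E}_{X_j}\bigl[\Pr[q=0 \mid X_S=x_S,\,X_j]\bigr]$, which equals $\mathbb{E}_{X_j}[\tfrac12]=\tfrac12$ by the inductive hypothesis applied to $S\cup\{j\}$.

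Given the amplification, the incentive analysis is short. Honest play yields expected utility exactly $\tfrac12$ for every agent. A duplication by $d\ge 2$ virtual identities carries a strictly positive detection probability $1-P_D$, because the apparent ring size $n+d-1$ exceeds $\beta$ whenever the true $n$ is close to $\beta$; detection collapses utility to $0$ by Solution Preference. Conditioned on going undetected, the amplification lemma ensures that the cheater's choice of $d$ virtual inputs cannot bias the output distribution, so the conditional utility is still at most $\tfrac12$. Hence any $d\ge 2$ deviation yields at most $\tfrac12\cdot P_D < \tfrac12$, while non-duplication deviations are already handled by Theorem~\ref{theorem:ks-limit}.

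The main obstacle will be the amplification lemma itself: Full Knowledge is asserted only coordinate-by-coordinate, and I need to verify that balancedness really does propagate to an \emph{arbitrary} fixed subset. The tower-rule induction above handles this cleanly, but one has to be careful to use the uniform-prior assumption from the model, since the lemma is about an average probability rather than a worst-case one. A secondary subtlety is ruling out adaptive mid-execution deviations---a cheater that duplicates might learn $q(I)$ earlier than honest agents and try to coerce the opposite output---but any broadcast inconsistent with the earlier \texttt{Secret-Transmit} commitment is caught by two honest neighbors (collapsing utility to $0$), and any consistent behavior still leaves the output distribution at $\tfrac12$--$\tfrac12$ by the lemma, so no deviation can beat the $\tfrac12\cdot P_D$ bound.
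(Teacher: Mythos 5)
Your part (ii) is fine: Theorem~\ref{theorem:KSgeneral} indeed applies verbatim to $k=2$ (when the cheater learns $q(I)\neq p_a$ early, the only alternative value is $p_a$), which matches how the paper gets the ``no equilibrium with $(1,\infty)$-Knowledge'' half. The amplification lemma in part (i) is also correct as stated. The gap is in how you use it: the lemma only controls the output distribution when the cheater's inputs are fixed \emph{independently} of the honest inputs, and that is exactly what fails in the dangerous regime. If the cheater duplicates into $d>n$ identities (which goes undetected whenever $d+n-1\leq\beta$, an event of positive probability for suitable $\alpha,\beta,d$), its block spans more than $\lceil n'/2\rceil$ consecutive positions of the apparent ring, so for some of its duplicates \emph{both} \texttt{Secret-Transmit} targets $b^1,b^2$ lie inside the block. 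Those inputs are never committed to any honest agent --- there is no ``commitment'' for the later broadcast to be inconsistent with, contrary to your closing remark --- and the cheater can first learn the honest inputs and then choose these uncommitted coordinates so that $q(I)=p_a$ (this is precisely Corollary~\ref{cor:ks-limit}, and the reason the proof of Theorem~\ref{theorem:appendix:ks-limit} needs $d\leq n$). So conditioned on being undetected the cheater's utility can be $1$, not at most $\tfrac12$, and your bound $\tfrac12\cdot P_D<\tfrac12$ does not follow.

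The conclusion itself is still true, but it needs the quantitative risk/benefit computation that the paper does in Appendix~\ref{theorem:ks-bound:proof}: with $n$ uniform on $[\alpha,\beta]$, a duplication to $d$ identities gains at most $X\leq 1$ (versus $\tfrac1k$) on the event $\{d>n,\ d+n-1\leq\beta\}$ and loses everything on $\{d+n-1>\beta\}$, and the resulting inequality (\ref{eq_ks_expected_utility}) is satisfiable only when $X>\tfrac2k$; for $k=2$ this is impossible, because the gain event never carries more probability mass than the failure event, for every $d$ and every finite $[\alpha,\beta]$. Equivalently, in your notation the correct bound is $1\cdot\Pr[\text{gain}]+\tfrac12\Pr[\text{neutral}]\leq\tfrac12$, which holds because $\Pr[\text{gain}]\leq\Pr[\text{failure}]$ --- an argument about the prior on $n$, not about the cheater's inability to bias $q$. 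Without this step (or some substitute for it), your proof of the positive direction does not go through.
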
 

\paragraph{Coloring}
\label{class_color}

\begin{theorem}
	\label{theorem:coloring:bound}
	Coloring in a ring is $\infty$-bound.
\end{theorem}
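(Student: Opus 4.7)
The plan has two complementary halves. The negative half, namely that no Coloring algorithm is an equilibrium under $(1,\infty)$-knowledge, is already packaged in Theorem~\ref{th_color_imp}, so I can cite it directly. The positive half is to exhibit, for every fixed $\alpha \le \beta < \infty$, an algorithm that is an equilibrium under $(\alpha,\beta)$-knowledge.

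My intended algorithm is the one from Section~\ref{section:alg:coloring} augmented with a preliminary verification step: every agent first runs Wake-Up, learns the apparent ring size $n'$, aborts whenever $n' \notin [\alpha,\beta]$, and otherwise proceeds to Renaming and the greedy color-selection step unchanged. An honest execution always satisfies $n' = n \in [\alpha,\beta]$, so correctness is preserved and the honest-play probability $p$ that a given agent receives its preferred color is unchanged.

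To analyze deviations, consider a cheater $a$ that pretends to be a chain of $d \ge 2$ agents interposed between its two real ring neighbors $u$ and $v$. The apparent size becomes $n + d - 1$, and the verification rejects whenever $n > \beta - d + 1$. Since the cheater's prior on $n$ is uniform on $[\alpha,\beta]$, its probability of escaping detection is $q(d) = \max\bigl\{0,(\beta-\alpha+2-d)/(\beta-\alpha+1)\bigr\}$. Writing $p'(d)$ for the maximum conditional probability that $a$ can output its preferred color given the check passes, equilibrium reduces to verifying $p \ge q(d)\,p'(d)$ for every $d \ge 2$, with Solution Preference closing the case when $q(d) = 0$.

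The central estimate is a $d$-independent upper bound on $p'(d)$. I would argue this in two steps: first, only the boundary copies $a_1$ and $a_d$ share an edge with a real agent ($u$ and $v$, respectively), so only those two positions can directly influence $c(u)$ and $c(v)$, which are the sole colors the cheater's real output must avoid; second, by the symmetry of an unbiased Renaming, the joint distribution of the Renaming ranks of $a_1, a_d, u, v$ is that of four uniformly random labels and does not depend on $d$. Together these give a constant $p^\star$ with $p'(d) \le p^\star$ for every $d \ge 2$, so the required inequality reduces to $p \ge q(d)\,p^\star$ whose worst case is $d=2$, where $q(2) = 1 - 1/(\beta-\alpha+1)$. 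Calibrating the color palette and the random tie-breaking inside Renaming so that $p \ge p^\star \bigl(1 - 1/(\beta-\alpha+1)\bigr)$ then closes the argument.

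The main obstacle will be rigorously establishing the $d$-independent bound on $p'(d)$. The cheater also controls the preferences and random bits of the $d - 2$ internal copies, and can try to propagate constraints outward to bias the effective preferences that $a_1$ and $a_d$ present to $u$ and $v$, or to engineer the relative Renaming order. Ruling out such indirect amplification requires a careful structural analysis of the greedy step restricted to the cheater's subchain, tracking which colors remain available at the two boundaries and showing that the internal copies' influence decouples from the boundary copies' interaction with $u$ and $v$.
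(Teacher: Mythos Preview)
Your approach differs substantially from the paper's, and it carries a genuine gap.

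The paper does not analyze the Renaming-based algorithm under duplication at all. Instead it exhibits a new ring-coloring algorithm (Algorithm~\ref{algorithm:coloring:ring}) whose only nontrivial ingredient is a single call to $2$-Knowledge Sharing: all agents jointly produce one random bit $b$, then publish their preferred colors, and in every maximal run of consecutive agents sharing a preference the bit $b$ decides whether the even- or the odd-indexed agents of that run receive it. Everything after the shared bit is deterministic given $b$ and the published preferences, so the only place a cheater could profit is inside the $2$-Knowledge Sharing subroutine; since that problem is $\infty$-bound by Corollary~\ref{cor:k=2}, Coloring inherits the $\infty$-bound directly. The impossibility direction is indeed Theorem~\ref{th_color_imp}, as you say.

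The gap in your route appears in the limit $\beta-\alpha\to\infty$. There $q(d)\to 1$ for each fixed $d$, so your target inequality collapses to $p\ge p'(d)$: you must show that duplication \emph{never} strictly helps in the Renaming-based algorithm, not merely that its benefit is bounded by some $p^\star$. Your argument for this rests on the assertion that Renaming stays ``unbiased'' when the deviator controls $d$ identities, so that the ranks of $a_1,a_d,u,v$ are four uniform labels independent of $d$. But the equilibrium proof for Renaming in~\cite{Afek:2014:DCB:2611462.2611481} is for a single deviating identity with exact knowledge of $n$; a duplicating agent is, from the subroutine's point of view, a coalition of size $d$, and nothing you cite rules out that such a coalition can skew the ranks of $a_1$ and $a_d$ downward (the paper itself warns that building blocks must be re-justified in each new context). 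If any such bias is possible then $p'(d)>p$ for some $d$, and your inequality fails once $\beta-\alpha$ is large enough. The proposed ``calibration of the color palette'' does not rescue this, since the margin $p'(d)-p$ need not shrink with the palette size. You correctly flag this as the main obstacle, but it is structural rather than merely technical; the paper sidesteps it entirely by reducing to $2$-Knowledge Sharing, where the $\infty$-bound is already in hand.
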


\paragraph{Leader Election}
\label{class_LE}

In the Leader Election problem, each agent $a$ outputs $o_a \in \{0,1\}$,
where $o_a = 1$ means that $a$ was elected leader and $o_a = 0$ means otherwise.
$\Theta_L = \{O | \exists a: o_a = 1, \forall{b \neq a}: o_b = 0\}$.
We assume that every agent prefers either $0$ or $1$.

\begin{theorem}
	\label{theorem:leader-bound}
	Leader Election is $(\alpha + 1)$-bound.
\end{theorem}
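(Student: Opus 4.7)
The plan is to establish $(\alpha+1)$-boundedness by separately exhibiting an equilibrium for every $\beta \le \alpha+1$ and an incentive to cheat for some $\alpha,\beta$ with $\beta > \alpha+1$.

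For the upper bound, I would run the ADH'13 leader-election protocol on the perceived ring of size $n' = n + d - 1$, with every honest agent aborting if $n' > \beta$. Since ADH'13 is an equilibrium on any specific ring size, the only potentially profitable family of deviations is the choice of a duplication factor $d$. With $\beta = \alpha + 1$, any $d \ge 3$ is detected with certainty (utility $0$), so the only candidate is $d = 2$, which escapes detection exactly when $n = \alpha$ (probability $1/2$); in that case the cheater controls $2$ of the $\alpha + 1$ equiprobable candidates and wins with probability $2/(\alpha+1)$. The resulting expected utility $1/(\alpha+1)$ is strictly dominated by the honest expected utility $\tfrac{1}{2}\bigl(\tfrac{1}{\alpha} + \tfrac{1}{\alpha+1}\bigr)$, with gap $\tfrac{1}{2\alpha(\alpha+1)}$. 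The analogous calculation for an agent preferring $0$ is easier, since duplication can only shrink its probability of avoiding election.

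For the lower bound, I would fix $\alpha \ge 2$ and $\beta = \alpha + 2$. The key structural fact I would invoke is that any equilibrium leader-election protocol must elect each perceived agent with probability exactly $1/n'$: otherwise an agent with below-average probability could copy the random choices and message pattern of an agent with strictly greater probability, contradicting equilibrium. Under this uniformity, an agent preferring to be leader who duplicates to $d = 2$ escapes detection with probability $2/3$ and then wins with probability $2/(n+1)$, for expected utility $\tfrac{2}{3}\bigl(\tfrac{1}{\alpha+1} + \tfrac{1}{\alpha+2}\bigr)$, while honest play yields $\tfrac{1}{3}\bigl(\tfrac{1}{\alpha} + \tfrac{1}{\alpha+1} + \tfrac{1}{\alpha+2}\bigr)$. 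Their difference simplifies to $\tfrac{1}{3}\bigl(\tfrac{1}{\alpha+1} + \tfrac{1}{\alpha+2} - \tfrac{1}{\alpha}\bigr)$, which is strictly positive exactly when $\alpha^2 > 2$, i.e.\ for every $\alpha \ge 2$. Hence the cheater strictly benefits from duplicating and the protocol cannot be an equilibrium.

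The main obstacle is justifying the $1/n'$ uniformity of election probabilities across \emph{every} equilibrium protocol, not just the specific ADH'13 construction used in the upper bound. I would handle this in the spirit of ADH'13: in a ring with IDs drawn uniformly from a large space, any asymmetry in election probabilities would allow the least-favored agent to emulate the transcript (random tape plus message schedule) of a more-favored one, contradicting equilibrium. With that fact in hand, the remainder is the two arithmetic comparisons above together with the routine bookkeeping for the preference-$0$ case and for the interaction of duplication with additional non-duplication cheating (which, by the per-size equilibrium property, cannot add any extra advantage).
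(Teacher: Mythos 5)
Your proposal is correct and follows essentially the same route as the paper: invoke the fairness property of any equilibrium Leader Election protocol (from ADH'13), then compare the expected utility of a single duplication against honest play under the uniform prior on $n$, showing no gain when $\beta = \alpha+1$ and a strict gain for $\beta = \alpha+2$ (for sufficiently large $\alpha$). Your version is slightly more careful than the paper's --- taking the expectation over $n$ explicitly rather than comparing conditionally on $n$, and noting the $d\ge 3$ and preference-$0$ cases --- but the underlying argument is the same.
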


\paragraph{Ring Partition}
\label{class_partition}

In the Ring Partition problem, the goal is to partition the agents of an even-sized ring
into two, equally-sized groups: group $0$ and group $1$.
We assume that every agent prefers to belong to either group $0$ or $1$.

\begin{theorem}
	\label{theorem:partition-bound}
	Ring Partition is unbounded.
\end{theorem}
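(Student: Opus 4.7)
The plan is to exhibit an algorithm for Ring Partition that is an equilibrium under $(1,\infty)$-Knowledge. Since the ring is guaranteed to be even-sized and the only constraint on the output is global balance (exactly $n/2$ in each group, with no local adjacency constraint as in Coloring), my candidate algorithm is: (i) run Wake-Up to learn the ring size $n'$ and the cyclic ordering of IDs; (ii) abort if $n'$ is odd; (iii) perform a synchronous coin toss in which every agent simultaneously broadcasts a random bit, the bits propagate in both directions around the ring, each agent cross-checks the two copies it receives, and $b = \bigoplus_i b_i$ is computed; (iv) assign color $b$ to the smallest-ID agent and alternate colors around the ring.

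I would prove equilibrium by case analysis on an arbitrary deviator $a$. Non-duplicating deviations are handled first: any inconsistency in the broadcast phase is caught because each bit arrives from both sides of the ring, triggering an abort that Solution Preference rules out; and $a$'s choice of its own bit does not bias $b$ since honest bits are independent and uniform. For deviations with duplication, suppose $a$ pretends to be $d$ consecutive fake agents, so that the fake ring has size $n+d-1$. If $d$ is even then $n+d-1$ is odd and step (ii) aborts, which is worse than honest play. For odd $d$, among $a$'s $d$ fake positions the alternating coloring produces $(d+1)/2$ of one color and $(d-1)/2$ of the other, and a direct counting argument on the balance of the real output vector shows that the partition on the real $n$ agents is balanced if and only if $a$'s single real output equals the majority color among its fakes. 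Solution Preference therefore forces $a$ to output that majority color.

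The crucial step is to argue that this majority color is uniform over $\{0,1\}$ from $a$'s viewpoint. It is determined by the parity of the ring-distance from the smallest-ID node to $a$'s leftmost fake, XORed with $b$; this distance is fixed by $a$'s fake-ID assignment and by the choice of whether any fake carries the smallest ID, all of which are set before the coin toss. Because $b$ is committed through simultaneous broadcast after $a$'s fake setup, and honest contributions are uniform and independent of $a$'s $d$ contributions, $b$ is uniform conditional on everything $a$ knows, so the majority color is uniform. Any duplication strategy therefore yields expected utility $1/2$, which matches the honest expected utility (also $1/2$, since $a$'s own honest color depends on $b$ in the same uniform way), so no deviation strictly improves and Ring Partition is unbounded. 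The main obstacle I anticipate is justifying that the synchronous coin toss with two-sided consistency checks genuinely prevents a duplicating $a$, who contributes $d$ bits and controls a contiguous arc of the ring, from adaptively biasing the XOR --- in particular, showing both that all $d$ fake bits must be committed before any honest bit is revealed and that differential reveals to the two halves of the ring are detected.
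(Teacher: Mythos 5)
Your reduction hinges on a global coin $b=\bigoplus_i b_i$ computed by simultaneous broadcast around the ring, and the obstacle you flag at the end is not a technicality --- it is fatal in the unbounded setting. With $(1,\infty)$-Knowledge a cheater may pretend to be $d$ consecutive agents for $d$ arbitrarily large, say $d>n$. The bit attributed to its $i$-th fake must cross the cheater's two real boundary edges at rounds $i$ and $d+1-i$ respectively, so the two-sided consistency check only forces that bit to be committed by round $\min(i,d+1-i)$; for the middle fake this deadline is about $d/2$. Meanwhile every honest bit reaches the cheater (from its nearer side) by round about $n/2$. Hence for $d>n$ the cheater can relay everything with correct timing, collect the XOR of all $n-1$ honest bits, and only then fix its middle fake's bit so that $b$ takes whichever value makes its forced output (the majority color of its block, which by your own counting argument Solution Preference compels it to output) equal to $p_a$. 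All timing and cross-checks pass, and the cheater's utility rises from $1/2$ to $1$, so your algorithm is not an equilibrium. This is exactly the mechanism behind the paper's impossibility result for Knowledge Sharing (Theorem~\ref{theorem:KSgeneral} and Corollary~\ref{cor:ks-limit}): a global fair bit is a $2$-Knowledge Sharing instance, and $2$-Knowledge Sharing is only $\infty$-bound (Corollary~\ref{cor:k=2}), i.e., it has no equilibrium when no bound on $n$ is known, so no proof of unboundedness can route through it.

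The paper's proof avoids global randomness altogether. A token circulates, alternately marking agents $0$ and $1$, which pairs each agent with an adjacent partner; each pair performs a simultaneous two-party coin exchange, and the pair's outputs are complements ($o=t_a\oplus t_{\text{partner}}\oplus m$ with opposite marks $m$), so balance is enforced locally pair by pair rather than by a ring-wide function. The initiator's parity check on the returning token catches any duplication that changes the parity of the ring, and since each real agent's output is decided by a simultaneous exchange with a single real neighbor, duplication yields no informational advantage: whatever the cheater does, the output it is forced to produce remains uniform from its point of view. If you want to salvage your write-up, you must replace the global XOR coin with such a local pairing mechanism (or otherwise ensure no agent's output depends on information a far-away, possibly emulated, portion of the ring can withhold until after it has seen everything).
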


\paragraph{Orientation}
\label{class_orientation}

In the Orientation problem the two ends of each edge must agree on a direction for this edge.
We assume that every agent prefers certain directions for its edges.

Unlike Ring Partition, Orientation is defined for any graph.
It is, however, a very local problem (radius $1$ in the \textit{LOCAL} model \cite{Linial:1987:DGA:1382440.1382990}).

\begin{theorem}
	\label{theorem:orientation-bound}
	The Orientation problem is unbounded.
\end{theorem}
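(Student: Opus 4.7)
The plan is to exhibit a simple, purely local algorithm for Orientation and argue it is an equilibrium even when agents have no a-priori information about $n$. The crucial structural observation is that Orientation has radius $1$: for every edge $(u,v)$ the only decision that must be coordinated is between $u$ and $v$. Hence neither knowledge of $n$ nor any global structure is required for correctness or for equilibrium.

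I propose the following algorithm, which runs in a single synchronous round. Each agent $u$ sends to every neighbor $v$ a uniformly random bit $r_{u \to v}$ along with $u$'s preferred direction for the edge $(u,v)$. After the exchange, $u$ decides the direction of $(u,v)$ by the rule: if $u$ and $v$ reported the same preferred direction, output that direction; otherwise output the direction indicated by $r_{u \to v} \oplus r_{v \to u}$ under a fixed canonical mapping $\{0,1\} \to \{u\to v, v\to u\}$. Both endpoints apply the same rule to the same pair of bits, so every edge is oriented consistently and the output is always legal.

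For equilibrium against a non-duplicating deviation, fix a cheater $a$ and any one of its edges $(a,b)$. Since $b$ follows the protocol, $r_{b \to a}$ is uniform and independent of everything $a$ does, so $r_{a \to b} \oplus r_{b \to a}$ is uniform for \emph{every} choice $a$ may make in the disagree branch, and $a$ cannot bias a tied coin flip. A short case analysis (averaging the four combinations of $a$'s report and $b$'s true preference under the uniform prior) shows that reporting one's true preference strictly dominates every alternative on that edge. Since the utility function~(\ref{eq_u_a}) makes $a$'s utility a product of independent per-edge success events, no joint deviation improves expected utility either, so honest play is an equilibrium.

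For equilibrium against duplication, suppose $a$ emulates imaginary agents $a_1,\dots,a_k$ together with some fake topology among them. Every \emph{real} edge incident to $a$ in $G$ is still handled by exactly one $a_i$ on one side and an honest original neighbor on the other side, so the single-edge argument above applies verbatim and $a$ gains nothing. Phantom edges introduced among the duplicates do not appear in $G$ and contribute nothing to $a$'s true utility. Because the algorithm never consults $n$ and requires no global topology check, introducing phantom agents raises no inconsistency an honest participant could detect, yet it also affords no leverage. The main obstacle is confirming that the preference-and-bit exchange cannot be subverted by timing tricks or message withholding; this is handled by Solution Preference (withholding causes algorithm failure, which is always worse) together with the synchronous-round assumption (bits are committed simultaneously, so no side can condition on the other). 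Combining these points yields an equilibrium for Orientation under $(1,\infty)$-Knowledge, establishing that the problem is unbounded.
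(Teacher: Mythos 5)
Your proposal is correct and rests on the same core mechanism as the paper's proof: orient each edge by a jointly generated random bit exchanged simultaneously by its two endpoints, observe that neither endpoint can bias the XOR of its own bit with an honest uniform bit, and note that duplication is useless because a Sybil creates no new real edges and each real edge still pairs one cheater-controlled endpoint with an honest original neighbor, while phantom edges never enter the cheater's true output. Where you differ is the decision rule: the paper ignores preferences entirely and always directs the edge toward the higher or lower \emph{id} according to the coin, so the cheater's per-edge success probability is pinned at exactly $1/2$ no matter what it does and no truthfulness argument is needed; you instead have endpoints report preferred directions and honor an agreement, falling back to the coin only on disagreement. This buys a Pareto-better outcome when neighbors happen to agree, but it obligates you to the extra dominance argument (truthful reporting yields per-edge success $3/4$ versus $1/4$ under the uniform prior, and the all-or-nothing utility of~(\ref{eq_u_a}) factors into independent per-edge events), which you do supply correctly. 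Two small points to tighten: your ``fixed canonical mapping'' in the disagree branch must be computable identically at both ends, so the agents need to exchange \emph{id}s (as the paper's algorithm does explicitly) since the model gives them no prior knowledge of each other's \emph{id}s; and it is worth stating, as the paper does, that lying about the exchanged \emph{id} cannot help either, because the coin is uniform regardless of which endpoint the mapping favors.
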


\section{Discussion}
\label{section_discussion}

Distributed algorithms are commonly required to work in an arbitrarily large network. In a realistic scenario, the exact size of the network may not be known to all of its members. 
In this paper, we have shown that in most problems the use of duplication gives an agent power to affect the outcome of the algorithm. 
The amount of duplications an agent can create is limited by the ability of other agents to detect this deviation, and the only tool for this ability is a-priori knowledge about $n$. Section~\ref{section_imp} shows that for some problems, without any such knowledge, distributed problems become impossible to solve without any agent having an incentive to deviate from the algorithm.

The $f$-bounds we have proven for common distributed problems show that the initial knowledge required for equilibrium to be possible depends on the balance between two factors: (1) The amount of duplications necessary to increase an agent's expected utility and by how much it increases, and (2) the expected utility for an agent if it follows the protocol. In order for an agent to have an incentive to duplicate itself, an undetected duplication needs to be either a lot more profitable than following the algorithm or it must involve low risk of being caught.

Our results produce several directions that may be of interest:
\begin{enumerate}
	\item Proving impossibility and $f$-bounds in general topology graphs, as for some of the problems we only discussed ring networks. 
	
	\item Proving impossibility and showing algorithms for other problems with rational agents, which result in other tight $f$-bounds.

	\item Finding a problem that is $\alpha$-bound,
	i.e., has an equilibrium only when $n$ is known exactly.
	
	\item What defines a trivial or non-trivial problem with rational agents? More specifically, finding a characteristic that separates problems that can be solved without \emph{any} knowledge about $n$ from ones in which at least some bounds must be a-priori known.
	
	\item Finding an \emph{unbounded} problem not inherently limited (as Orientation or ring Partition are),
	or finding proof that no such problem exists.
	
	\item Exploring the effects of initial knowledge about network size in an asynchronous setting.
	
	\item Similar to \cite{DISC13/ADH,Afek:2014:DCB:2611462.2611481},
		our notion of equilibrium is basically Bayes-Nash equilibrium,
		but not sequential equilibrium \cite{KrepsWilson82}.
		Sequential equilibrium removes the assumption that agents fail the algorithm if they detect another agent cheating.
		In \cite{DISC13/ADH}, the authors suggest an additional assumption on agents' utility functions
		in order to obtain sequential equilibrium;
		however, in case an agent detects a deviation that does not necessarily lead to the algorithm failure,
		it still has no incentive to cause the algorithm failure,
		and thus it is not a sequential equilibrium.
		It would be interesting to find sequential equilibria and the problems for which they are possible.
\end{enumerate}

\section{Acknowledgment}
We would like to thank Doron Mukhtar for showing us the ring partition problem and proving it is \emph{unbounded},
when we thought such problems do not exist.
We would also like to thank Michal Feldman, Amos Fiat, and Yishay Mansour for helpful discussions.

\clearpage
\bibliographystyle{abbrv}

\clearpage
\appendix

\section{Algorithms}
\label{appendix_algs}

\subsection{Knowledge Sharing in a Ring}
\label{appendix_ks}

Here we present an algorithm for Knowledge Sharing
that is an equilibrium in a ring when no cheating agent pretends to be more than $n$ agents.
Clearly, when agents a-priori know $n$ it is an equilibrium,
since a cheating agent is further constrained not to duplicate at all.
At any point in the algorithm,
whenever an agent recognizes that another agent has deviated from the protocol
it immediately outputs $\bot$ resulting in the failure of the algorithm.

We assume a global orientation around the ring.
This assumption can be easily relaxed via Leader Election \cite{Afek:2014:DCB:2611462.2611481}.
Since the orientation has no effect on the output,
Leader Election is an equilibrium in this application.

According to Corollary~\ref{cor:ks-limit}, when a cheating agent pretends to be
\emph{more} than $n$ agents, there is no algorithm for Knowledge Sharing that is an equilibrium.
On the other hand, the algorithm presented here \emph{is} an equilibrium
when the cheating agent pretends to be \emph{no more} than $n$ agents,
proving that the bound is tight.

We start by describing the intuition behind the algorithm.
Let $n'$ be the size of the ring, which may include duplications.
Since a cheater may be at most $\lceil \frac{n'}{2} \rceil$ agents, our algorithm must ensure that any group of $\lceil \frac{n'}{2} \rceil$ consecutive agents never gains enough collective information to calculate $q(I)$, the output of Knowledge Sharing,
before the collective information at the rest of the ring is also enough to calculate $q(I)$. 

To ensure this property, we employ a method by which agent $s$ delivers its input $i_s$ to some agent $t$ at a specific round $r$,
without revealing any information about $i_s$ to any of the agents
other than $s$ and $t$. 
This method is used by every agent to send its input to two other agents
that are distant enough to prevent a consecutive group of size $\lceil \frac{n'}{2} \rceil$ from learning this input too early.

At round $n'$, the input values sent by this method are revealed simultaneously. Afterwards, every possible group of $\lceil \frac{n'}{2} \rceil$ consecutive agents had already committed to the inputs of all its members, so it is too late to change them. Now every agent can simply send its input around the ring.

\begin{figure}[H]
	\centering
	\begin{minipage}{0.38\textwidth}
		\centering
		\includegraphics[width=0.9\textwidth]{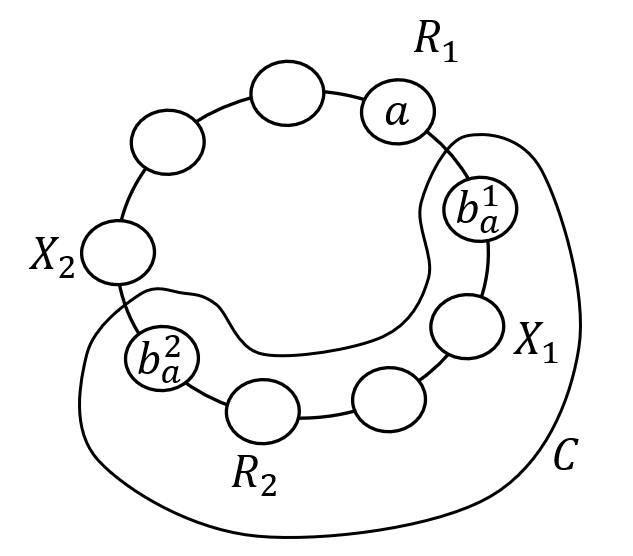} 
		\caption{at round $n'-1$: $i_a=R_1 \oplus X_1=R_2 \oplus X_2$;
			group $C$ does not know the input of $a$}
	\end{minipage}\hfill
	\begin{minipage}{0.45\textwidth}
		\centering
		\includegraphics[width=0.9\textwidth]{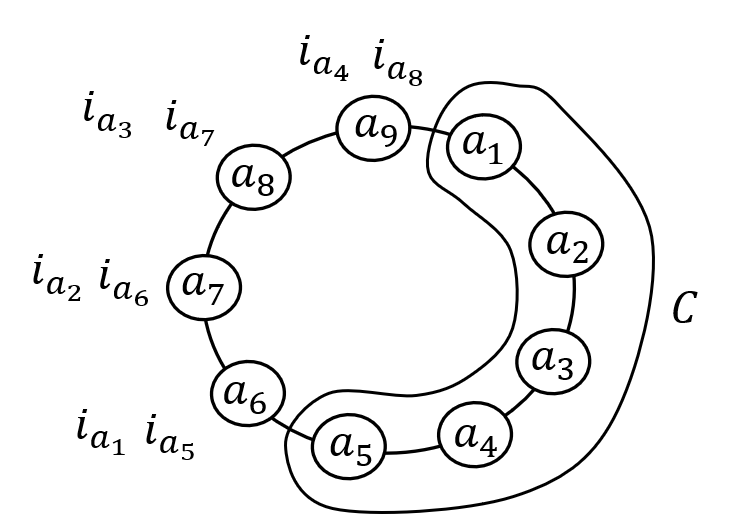}
		\caption{the input that agents $a_6,\dots,a_9$ learn at round $n'$;
			each input of an agent in $C$ is known by an agent not in $C$}
	\end{minipage}
\end{figure}

Algorithm~\ref{alg:appendix:ks-st} describes the \texttt{Secret-Transmit} building block.
The building block is called by an agent $a$ and receives three arguments:
its input $i_a$, a round $r$, and a target agent $b$.
It assumes neighbors of $b$ know they are neighbors of $b$.
The building block delivers $i_a$ at round $r$ to agent $b$,
and no other agent around the ring gains any information about this input.
Additionally, agent $b$ learns the input at round $r$ and not before.
In the \texttt{Secret-Transmit} building block, agent $a$ selects a random number $R$ and a value $X$, the XOR of its input $i_a$ with $R$.
Each value is sent in a different direction around the ring until reaching a neighbor of $b$.
At round $r-1$, both neighbors send the values $X$ and $R$ to $b$, thus $b$ learns $i_a$ at round $r$
and no other agent around the ring has any information about $i_a$ at round $r$.

\begin{algorithm}
\caption{Secret-Transmit($i_a,r,id_b$)}
\label{alg:appendix:ks-st}
\begin{algorithmic}[1]
	\item Select a random number $R$
	\item Let $X=R \oplus i_a$
	\State Send $R$ clockwise until it reaches a neighbor of $b$
	\Comment{Each message counts down from $r$}
	\Statex Send $X$ counter-clockwise until it reaches a neighbor of $b$
	\Comment{in order to know how long to wait}
	\item At round $r-1$, each neighbor of $b$ sends $b$ the value it received,
		either $X$ or $R$
\end{algorithmic}
\end{algorithm}

Algorithm~\ref{alg:appendix:ks-ring} solves Knowledge Sharing in a ring using the
\texttt{Secret-Transmit} building block.
All agents simultaneously transmit their input, each to $2$ other agents.
For each agent $a$, the input $i_a$ is sent using \texttt{Secret-Transmit} to its clockwise neighbor,
and to the agent that is at distance $\lfloor \frac{n'}{2} \rfloor$ counter-clockwise from $a$.
Note that these agents form the two ends of a group $C$ of  $\lceil \frac{n'}{2} \rceil$ consecutive agents that do not include $a$.
This guarantees that if $C$ is a cheater pretending to be $\lceil \frac{n'}{2} \rceil \leq n$ agents,
it does not learn the input $i_a$ before round $r$, since at least one piece of each transmission has not reached any agent in $C$ at any round $< r$.
At round $r$, the agents in $C$ already committed all of their input values
to some agents in the ring that are not in $C$.

\begin{algorithm}
\caption{Knowledge Sharing in a Ring}
\label{alg:appendix:ks-ring}
\begin{algorithmic}[1]
	\item All agents execute Wake-Up \cite{Afek:2014:DCB:2611462.2611481} to learn the ids
		of all agents and $n'$, the size of the ring (which may include duplications)
	\State For each agent $a$, denote $b^1_a$ the clockwise neighbor of $a$,
		and $b^2_a$ the agent at distance $\lfloor \frac{n'}{2} \rfloor$ counter-clockwise from $a$
	\State Each agent $a$ simultaneously performs:
	\Statex \texttt{SecretTransmit}($i_a,n',b^1_a$)
	\Statex \texttt{SecretTransmit}($i_a,n',b^2_a$)
	\State At round $n'+1$, each agent sends its input around the ring
	\State At round $2n'$ output $q(I)$
\end{algorithmic}
\end{algorithm}

\begin{theorem}
	\label{theorem:appendix:ks-limit}
	In a ring, Algorithm~\ref{alg:appendix:ks-ring} is an equilibrium
	when no cheating agent pretends to be more than $n$ agents.
\end{theorem}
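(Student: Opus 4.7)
The plan is to rule out profitable deviations by a single cheater $a$. The subroutines Wake-Up and Leader Election are already equilibria in this application, and ordinary message tampering is caught because every input is delivered redundantly, once via Secret-Transmit and once via the broadcast in step 4 of Algorithm~\ref{alg:appendix:ks-ring}. So the only nontrivial deviation to analyze is duplication: $a$ pretends to be $d$ consecutive virtual identities, $1 \leq d \leq n$, occupying a segment $C_{ch}$ of the virtual ring of size $n' = n + d - 1$.

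First I would record the arithmetic fact that $d \leq n$ combined with $n' = n + d - 1$ yields $d \leq \lceil n'/2 \rceil$. Consequently the cheater's segment is no larger than the secret window of size $\lceil n'/2 \rceil$ built into the algorithm.

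Next I would prove the key hiding lemma: there exists an honest agent $v \notin C_{ch}$ such that at every round $r < n'$ the joint view of $C_{ch}$ contains no information about $i_v$. For each honest $v$, the two Secret-Transmit invocations are scheduled so that the $R$ and $X$ pieces coalesce only at the specified neighbors of $b^1_v$ and $b^2_v$ at round $n'-1$; in particular, the specific segment $C_v$ of $\lceil n'/2 \rceil$ consecutive agents strictly between $b^2_v$ and $b^1_v$ (excluding $v$) holds at most one of the two pieces per invocation at every earlier round, so the one-time-pad identity $X = R \oplus i_v$ keeps $i_v$ perfectly hidden from $C_v$. A translation / pigeonhole argument using $|C_{ch}| = d \leq \lceil n'/2 \rceil = |C_v|$ then shows that at least one choice of honest $v$ satisfies $C_{ch} \subseteq C_v$, and for that $v$ the hiding applies to $C_{ch}$.

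Given the hiding lemma I would close the argument as follows. The cheater must commit at round $0$ to the inputs of all $d$ virtual identities, because the $R$ and $X$ pieces for each Secret-Transmit invocation must begin traveling immediately; these commitments are made while the cheater has no information about any honest input. By Full Knowledge of $q$, conditioned on any assignment of the cheater's $d$ virtual inputs, every value in the range of $q$ is equally likely over the remaining $n - 1$ honest inputs. So the cheater's expected utility is $1/|\mathrm{range}(q)|$ regardless of its duplication strategy, identical to honest play. Any post-commitment deviation, such as broadcasting in step 4 a value different from the one committed via Secret-Transmit, or sending inconsistent values in the two broadcast directions, is detected: $b^1_v$ and $b^2_v$ compare the broadcast against what they received through Secret-Transmit, and the two traveling broadcast fronts meet and are compared along the ring. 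Detection triggers algorithm failure and therefore utility $0$ by Solution Preference, so it cannot be profitable. The main obstacle will be the hiding lemma, specifically verifying that a cheater occupying a consecutive segment of the virtual ring cannot pool partial information from many concurrent Secret-Transmit invocations of different honest senders to break the one-time pad for some $v$ with $C_{ch} \subseteq C_v$; a round-by-round accounting of which pieces lie in $C_{ch}$ is the crux, with edge cases for even versus odd $n'$ and for the placement of $b^1_v, b^2_v$ relative to $C_{ch}$ requiring care.
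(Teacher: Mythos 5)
Your first half matches the paper: the ``hiding lemma'' with the honest agent $v$ chosen so that the cheater's segment $C_{ch}$ lies inside the window $C_v$ of $\lceil n'/2\rceil$ consecutive agents is exactly the paper's argument (there $v=a_{n'}$, the honest agent whose window $\{a_1,\dots,a_{\lceil n'/2\rceil}\}$ contains $a_1,\dots,a_d$ outright, so no pigeonhole is even needed), and the one-piece-per-transmission accounting is the same.

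The gap is in your second half. The claim that ``the cheater must commit at round $0$ to the inputs of all $d$ virtual identities'' is false: pieces that travel inside the cheater's own segment are under its control, and by design the sender itself holds the $R$ piece of its transmission to its clockwise neighbor until round $n'-1$. So the cheater at the clockwise end of its segment can fix its claimed input as late as round $n'-1$, after $n'-1$ rounds of messages. Your own hiding lemma would repair this (at round $n'-1$ it is still ignorant of $i_v$, so by Full Knowledge it is indifferent about the values it hands over then -- this is precisely how the paper argues), but your proposal does not make that connection and instead leans on the false round-$0$ claim. More importantly, your detection argument for behavior after round $n'$ assumes that each duplicated identity's input has an \emph{honest} witness: if for some duplicated identity $a_j$ both $b^1_{a_j}$ and $b^2_{a_j}$ were also cheater identities, no honest agent would hold a commitment to $i_{a_j}$, and the cheater could choose that value during the round-$(n'+1)$ broadcast after having learned honest inputs, undetected. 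Ruling this out is a separate, necessary step: one must show that the arc from $b^1_{a_j}$ through $a_j$ to $b^2_{a_j}$ spans more than $\lceil n'/2\rceil \geq n \geq d$ agents, so the two recipients cannot both lie in a block of $d$ consecutive duplicates. The paper's proof consists of exactly these two parts (ignorance up to round $n'-1$, plus an honest witness for every duplicated input from round $n'$ on); your proposal establishes the first but substitutes an incorrect commitment claim for the second.
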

\begin{proof}
	Assume by contradiction that a cheating agent pretending to be $d \leq n$ agents
	has an incentive to deviate from Algorithm~\ref{alg:appendix:ks-ring},
	w.l.o.g., the duplicated agents are $a_1,\dots,a_d$
	(recall the indices $1,\dots,n'$ are not known to the agents).

	Let $n'$ be the size of the ring including the duplicated agents,
	i.e., $n' = n+d-1$.
	The clockwise neighbor of $a_{n'}$ is $a_1=b^1_{a_{n'}}$.
	Denote $a_c=b^2_{a_{n'}}$ the agent at distance $\lfloor \frac{n'}{2} \rfloor$ counter-clockwise from $a_{n'}$,
	and note that $c \geq d$.
	
	When $a_{n'}$ calls \texttt{Secret-Transmit} to $a_1$,
	$a_{n'}$ holds the piece $R$ of that transmission until round $n'-1$.
	When $a_{n'}$ calls \texttt{Secret-Transmit} to $a_c$,
	$a_{c+1}$ holds the piece $X$ of that transmission until round $n'-1$.
	By our assumption, the cheating agent duplicated into $a_1,\dots,a_d$.
	Since $d < c+1$,
	the cheater receives at most one piece ($X$ or $R$)
	of each of $a_{n'}$'s transmissions before round $n'$.
	So, there is at least one input that the cheater does not learn before
	round $n'$.
	According to the Full Knowledge property (Definition~\ref{full_knowledge}),
	for the cheater at round $n'-1$ any output is equally possible,
	so its expected utility for any value it sends is the same,
	thus it has no incentive to cheat regarding the values it sends in round $n'-1$.
	
	Let $a_j \in \{a_1,\dots,a_d\}$ be an arbitrary duplicated agent.
	In round $n'$, $i_{a_j}$ is known by its clockwise neighbor $b^1_{a_j}$
	and by $b^2_{a_j}$, the agent at distance $\lfloor \frac{n'}{2} \rfloor$
	counter-clockwise from $a_j$.
	Since the number of counter-clockwise consecutive agents in
	$\{b^1_{a_j},a_j,\dots,b^2_{a_j}\}$	is greater than
	$\lceil \frac{n'}{2} \rceil \geq n$,
	at least one of $b^1_{a_j}, b^2_{a_j}$ is not a duplicated agent.
	Thus, at round $n'$, the input of each agent in $\{a_1,\dots,a_d\}$
	is already known by at least one agent $\notin \{a_1,\dots,a_d\}$.

	At round $n'-1$ the cheater does not know the input value of at least one other agent,
	so it has no incentive to deviate.
	At round $n'$ for each duplicated agent the cheating agent pretends to be,
	its input is already known by a non-duplicated agent,
	which disables the cheater from lying about its input
	from round $n'$ and on.
		
	Thus, the cheating agent has no incentive to deviate, contradicting our assumption.
\end{proof}

\subsection{Coloring Algorithm}
\label{appendix_coloring}

Here, agents are given exact a-priori knowledge of $n$, i.e., they know the exact value of $n$ at the beginning of the protocol.
We present two protocols for Coloring with rational agents, and discuss their properties.
In both algorithms, whenever an agent recognizes that another agent has deviated from the protocol,
it immediately outputs $\bot$ resulting in the failure of the algorithm.

\subsubsection{Tie Breaking}
\label{sub_tb}
In most algorithms with rational agents,
a prominent strategy \cite{Afek:2014:DCB:2611462.2611481,Halpern:2016:RCE:2933057.2933088,CRYPTO/DodisHR00} is to create a neutral mechanism that when two agents' preferences conflict,
the mechanism decides which agent gets its preference, and which does not.
We refer to such a mechanism as  \emph{tie breaking}.

Since agent $id$s are private and agents may cheat about their $id$,
they cannot be used to break ties.
However, an \emph{orientation} over an edge shared by both agents,
achieved without any agent deviating from the protocol that leads to it,
can be such a tie breaking mechanism for coloring:
whenever neighbors prefer the same color, we break ties according to the orientation
of the link between them. Breaking ties for coloring also requires the orientation to be acyclic, since a cycle in which all agents prefer the same color creates a "tie" that isn't broken by the orientation.

Note that since the agents are rational,
unless agent $a$ knows that one or more of its neighbors output its preferred color $p_a$,
it will output it itself regardless of the result of the algorithm, which is a \emph {deviation}.
Thus, any coloring algorithm must ensure that whenever an agent can output its preferred color,
it does, otherwise the agent has an incentive to deviate.

We create an acyclic orientation by a \emph{Renaming} algorithm
that reaches equilibrium \cite{Afek:2014:DCB:2611462.2611481}.
The algorithm gives new names $1,\dots,n$ to the agents, which is in fact an $n$-coloring of $G$;
however due to the circumstances described above
(each agent should output its preference if none of its neighbors does),
this $n$-coloring is not enough.
Instead, each agent, in order of the new names, picks its preferred color if available,
or the minimal available color otherwise, and sends its color to all of its neighbors.

\begin{algorithm}[H]
	\caption{Distributed Coloring Using Renaming (for agent $a$)}
	\label{alg:Renaming}
	\begin{algorithmic}[1]
		\State set $T := \varnothing$
		\State Run \emph{Renaming} subroutine
		\Statex Denote $N(v) \in \mathbb{N}$ the name of $v \in V$ from the Renaming subroutine
		\Statex Denote $t$ the first round after \emph{Renaming} was complete
		\For {round $i = t$ ;  $i <= t + n ; i++$}
		\If {$N(a) = t + i$ }
		\If {$p_a \notin T$}
		\State Send $p_a$ to all neighbors
		\State Set $o_a = p_a$
		\Else
		\State Send $m = \min_k{k \notin T}$
		\State Set $o_a = m$
		\EndIf
		\Else 
		\If {received $N(v)$ from neighbor $v$}
		\State $T = T \cup \{N(v)\}$
		\EndIf
		\EndIf
		\EndFor 
	\end{algorithmic}
\end{algorithm}

\begin{theorem}
	Algorithm~\ref{alg:Renaming} reaches Distributed Equilibrium for the coloring problem.
\end{theorem}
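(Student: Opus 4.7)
The plan is to show that no single rational agent $a$ can strictly increase its expected utility by deviating at any round of Algorithm~\ref{alg:Renaming}, by splitting the execution into the Renaming subroutine and the subsequent sequential coloring phase. For Renaming, I would invoke the result of \cite{Afek:2014:DCB:2611462.2611481} that this subroutine is an equilibrium whenever agents strictly prefer a smaller new name, and verify that this ``prefer-small-name'' preference is aligned with the utility function inherited from the ambient coloring game: in the sequential phase $a$ obtains $p_a$ exactly when no smaller-named neighbor has taken $p_a$ before $a$'s turn, so a smaller $N(a)$ can only weakly increase $a$'s chance of receiving its preferred color, and the original Renaming equilibrium transfers.

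For the sequential phase, fix an arbitrary agent $a$ and consider its deviations round by round. Before round $t + N(a)$ the protocol prescribes only receiving, so any out-of-schedule message sent by $a$ is inconsistent with the protocol, is detected by a neighbor, and by Solution Preference $a$ has no incentive to send one. At round $t + N(a)$, if $p_a \notin T$ then the algorithm sets $o_a = p_a$, attaining the maximum utility $1$; if $p_a \in T$ then some neighbor has already committed to $p_a$, so declaring or outputting $p_a$ yields an illegal coloring that is detected and causes failure, while every legal alternative yields utility $0$ equal to the failure utility, so there is no strict incentive to depart from the min-available rule. After round $t + N(a)$ the color $o_a$ has already been broadcast and fixed, so no subsequent message by $a$ can alter $o_a$, and Solution Preference again rules out any further deviation.

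The main obstacle I expect is the ``subroutine caveat'' flagged at the start of Section~\ref{section_alg}: ruling out deviations that span both phases, particularly ones that leave $N(a)$ unchanged but permute other agents' names in a way that improves $a$'s prospects in the coloring phase. The key observation is that once names are fixed, $a$'s utility depends only on whether some smaller-named neighbor of $a$ also prefers $p_a$, a quantity already controlled by the Renaming equilibrium since it forbids a lone agent from unilaterally affecting others' assigned names; composite deviations therefore reduce to the per-phase case analysis above.
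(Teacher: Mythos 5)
Your proposal is correct and follows essentially the same route as the paper's proof: rely on the Renaming building block of \cite{Afek:2014:DCB:2611462.2611481} being an equilibrium for preferences over the assigned names (noting that a smaller name is the only name-related advantage in the coloring phase), and then argue in the sequential phase that out-of-order messages are detected, that taking $p_a$ when available yields utility $1$, and that when $p_a$ is taken the utility is already $0$ so Solution Preference removes any incentive to deviate. Your explicit treatment of cross-phase (composite) deviations is a slightly more careful spelling-out of what the paper handles implicitly, but it is the same argument in substance.
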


\begin{proof}
	Let $a$ be an arbitrary agent. Assume in contradiction that at some round $r$ there is a possible step $s \neq s_r$ such that:
	$$ \mathbb{E}_{s,r}[u_a] > \mathbb{E}_{s_r,r}[u_a]$$
	
	First, it must be shown that an agent does not have an incentive to deviate in the subroutine in order to affect the output of the entire algorithm. In the case of Algorithm~\ref{alg:Renaming}, the only deviation that would benefit $a$ in the Renaming subroutine is to minimize $N(a)$, i.e. ensuring it picks a color as early as possible. From the building block in \cite{Afek:2014:DCB:2611462.2611481} we get that the Renaming building block is an equilibrium for agents with preferences on the resulting names, thus ensuring that there is no relevant deviation possible in the subroutine as no agent can unilaterally improve the probability of having a lower $N$ value.
	
	Another property of the Renaming subroutine is that, after its completion, all agents know the names assigned to all agents in the network.
	
	Consider the possible steps $a$ could take at any round $r$ following the Renaming subroutine:
	
	\begin{itemize}
		\item Sending a message out of order is immediately recognized, as all $N$ values are known to all the agents as well as the round number. i.e., in any round $ r \neq t + N(a)$, $a$ has no incentive send any message at all, since it fails the algorithm.
		\item At $ r = t + N(a)$ , $a$ must output a color and send it to its neighbors. If $p_a \notin T$ then $a$ outputs $p_a$. It also has no incentive \emph{not} to correctly notify its neighbors that it is its output, as this notification ensures none of them output $p_a$ (as that would result in $0$ utility for that neighbor). If $p_a \in T$ then the color is taken by a neighbor, and $a$ has no incentive to deviate since its utility is already $0$.  
	\end{itemize}

	Thus, the algorithm solves Coloring and is an equilibrium.
\end{proof}

\subsubsection{Improving The Algorithm}
The \emph{Renaming} process induces more than an acyclic orientation of graph $G$,
it is a total ordering of all agents in the graph.
Coloring, however, is in many cases a local property
and can be decided locally \cite{Nati:Locality,Cole:1986:DCT:10366.10368,Goldberg:1987:PSS:28395.28429}. 
Additionally, the \emph{Renaming} protocol in \cite{Afek:2014:DCB:2611462.2611481} uses a costly $O(|E| \cdot n^2)$
message complexity.

We present another algorithm for coloring, detailed in Algorithm~\ref{alg:SemiTotalOrder},
which improves the message complexity to $O(|E|n)$
by computing an acyclic orientation of graph $G$.

First, run Wake-Up \cite{Afek:2014:DCB:2611462.2611481} to learn the graph topology
and the $id$s of all agents.
Then, in order of $id$s, each agent $a$ draws a random number $S(a)$
with a neighboring "witness" agent $w(a)$ as specified in Algorithm~\ref{alg:Draw}, and sends it to all of its neighbors.
The number is drawn in the range $1,\dots,n$ and is different than the numbers of all neighbors of $a$,
which is in fact a coloring of $G$. However, due to the circumstances described in
\ref{sub_tb}, this coloring is not enough.
By picking a random number with a witness,
the agent cannot cheat in the random number generation process, 
and $w(a)$ is marked as a witness for future verification.
When done, each agent simultaneously verifies the numbers published by its neighbors using Algorithm~\ref{alg:prompt},
which enables it to receive each value through two disjoint paths: directly from the neighbor,
and via the shortest simple path to the neighbor's witness that does not include the neighbor.
Then each agent, in order of the values drawn randomly, 
picks its preferred color if available, or the minimal available color otherwise,
and sends its color to all of its neighbors.

The resulting message complexity of the algorithm is as follows: 
$Wake-Up$ is $O(|E| \cdot n)$.
Drawing a random number is called $n$ times and thus uses $O(|E|)$ messages in total,
to publish $S$ values to neighbors.
Verifying the value of a neighbor uses $O(diameter)$ messages and is called $|E|$ times,
for a total of $O(|E| \cdot diameter)$ messages.
Sending the output color to all neighbors uses an additional $O(|E|)$ messages.
The total number of messages is thus $O(|E| \cdot n)$.
\begin{algorithm}[H]
	\caption{Draw($T$) Subroutine (for agent $a$)}
	\label{alg:Draw}
	\begin{algorithmic}[1]
		\Statex Denote $X={1,...,n}\setminus{T}$ \Comment {X is the set of numbers not drawn by neighbors}
		\State $w(a) :=$ minimal $id$ $u \in N(a)$
		\Statex \textbf{send} $witness$ to $w(a)$ 
		\Comment {choose neighbor with minimal $id$ as witness} 	
		\State $r(a) := random\{1,...,|X|\}$ drawn by $a$
		\Statex $r(w(a)) := random\{1,...,|X|\}$ drawn by $w(a)$ 
		\Statex \textbf{send} $r(a)$ to $w(a)$
		\Statex \textbf{receive} $r(w(a))$ from $w(a)$
		\Comment{$a$ and witness jointly draw a random number}
		\State Let $q := r(a)+r(w(a))\ mod\ |X|$.
		\Statex Save $S(a) := q$'th largest number in $X$
		\Statex \textbf{send} $S(a)$ to all $u \in N(a)$ \Comment {Calculate $S(a)$ and publish to neighbors} 
	\end{algorithmic}
\end{algorithm}

\begin{algorithm}[H]
	\caption{Prompt$(u)$ Subroutine (for agent $a$)}
	\label{alg:prompt}
	\begin{algorithmic}[1]
		\Statex upon receiving prompt$(u)$ from $u \in N(a)$:
		\State $p := \text{ shortest simple path  } a \rightarrow w(a) \rightarrow u$
		\Statex \textbf{send} $S(a),u $ via $p$ 
		\Comment{If $v \neq w(a)$ is asked to relay $S(a)$, $v$ fails the algorithm}
		\Statex \textbf{send} $S(a)$ to $u$ via $e = (a,u)$
		\Comment {$u$ validates that both messages received are consistent}

	\end{algorithmic}
\end{algorithm}

\begin{algorithm}[H]
	\caption{Coloring via Acyclic Orientation (for agent $a$)}
	\label{alg:SemiTotalOrder}
	\begin{algorithmic}[1]
		
		\State Run Wake-Up \Comment{After which all agents know graph topology}
		\State set $T := \varnothing$
		\For{$i=1,...,n$}
		\If{$id_a = i$'th largest $id$ in $V$}
		\State $Draw(T)$
		\Else
		\State \textbf{wait} $|Draw|$ rounds \Comment {Draw takes a constant number of rounds}	
		\If{received $S(v)$ from $v \in N(a)$}
		\State $T = T \cup \{S(v)\}$ \Comment {Add $S(v)$ to set of taken values}
		\EndIf
		\EndIf
		\EndFor
		
		\For{$u \in N(a)$ \textbf{simultaneously}}
		\State $Prompt(u)$  \Comment{Since we must validate the value received in line $8$}
		\EndFor
		\State \textbf{wait} until all prompts are completed in the entire graph \Comment {At most $n$ rounds}
		\For{round $t = 1,...,n $}:
		\If { $S(a) = t$ }  \Comment {Wait for your turn, decreed by your $S$ value}
		\If { $\forall v \in N(a) : o_v \neq p_a$}  $o_a := p_a$
		
		\Else { $o_a := $ minimum color unused by any $v \in N(a)$}
		\EndIf 
		\State \textbf{send} $o_a$ to $N(a)$
		\EndIf
		\EndFor
		
	\end{algorithmic}
\end{algorithm}

\begin{theorem}
	Algorithm~\ref{alg:SemiTotalOrder} reaches Distributed Equilibrium for the coloring problem.
\end{theorem}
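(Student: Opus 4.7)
The plan is to follow the same phase-by-phase decomposition used for Algorithm~\ref{alg:Renaming}: break the execution into Wake-Up, the sequential Draw phase, the Prompt verification phase, and the final color-selection phase, and argue at each round that no unilateral deviation strictly increases a cheater's expected utility.

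First, for the Wake-Up phase I would invoke \cite{Afek:2014:DCB:2611462.2611481} as a subroutine. As in Algorithm~\ref{alg:Renaming}, the only thing this phase reveals is the graph topology and the $id$s, which carry no information about the eventual colors; hence Solution Preference together with the original equilibrium guarantee for Wake-Up transfer to this context. The witness function $w(a)$ and the global ordering of Draw rounds become deterministic functions of the public $id$s, so no agent can strategically pick its witness or its slot.

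The key conceptual step is the Draw subroutine. I would argue that $S(a)$ is uniformly distributed over the set $X=\{1,\ldots,n\}\setminus T$ from the cheater's point of view, regardless of whether the cheater is $a$ or $w(a)$. Concretely, $q = r(a) + r(w(a)) \bmod |X|$ is uniform whenever at least one of the two summands is drawn uniformly, and since we assume a single deviator, the honest party supplies such a uniform summand. Consequently, the cheater cannot bias its own rank in the final ordering, nor can it bias the rank of any neighbor whose draw it participates in. Combined with the fact that picking the preferred color succeeds iff no earlier-ranked neighbor picks the same color, this means no cheater can improve the probability that $p_a$ is still available at its own turn by manipulating the random draws.

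Next I would turn to the Prompt verification phase, which is the main obstacle since this is where duplication-style lies about $S(a)$ could occur. Any agent $a$ that sends one value to neighbor $u$ directly and a different value through the path via $w(a)$ is immediately detected by $u$, which then outputs $\bot$. An agent that tries to lie consistently by sending the same false value through both routes must route the path message through $w(a)$, and $w(a)$ (honest, since we have a single deviator, and distinct from $u$ because $w(a)$ is the minimum-$id$ neighbor) refuses to relay an incorrect $S(a)$; by the clause "If $v \neq w(a)$ is asked to relay $S(a)$, $v$ fails the algorithm," any attempt to bypass $w(a)$ on the path also fails. Hence by Solution Preference no such deviation is profitable. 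I would also note that because the graph is $2$-vertex-connected and $w(a)\in N(a)$, the required simple path $a\to w(a)\to u$ disjoint from the direct edge $(a,u)$ always exists, so this verification cannot be blocked.

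Finally, for the color-selection phase, once all $S$ values are fixed and publicly verified, the assignment is a deterministic function of the $S$-order: at round $t$ the unique agent with $S(a)=t$ publishes $p_a$ if it is still free, otherwise the minimum free color. A deviation that outputs $p_a$ when it is already used by a neighbor yields an illegal coloring and triggers $\bot$, contradicting Solution Preference; a deviation that outputs any color other than $p_a$ when $p_a$ is free, or that delays/suppresses the announcement, strictly lowers $u_a$ from $1$ to $0$ by definition~\ref{eq_u_a}. Putting the four phases together establishes $\mathbb{E}_{s(r),r}[u_a]\geq \mathbb{E}_{s,r}[u_a]$ for every round $r$, every agent $a$, and every deviating step $s$, which is exactly the Distributed Equilibrium condition.
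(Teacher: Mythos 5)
Your proposal is correct and follows essentially the same phase-by-phase argument as the paper (Wake-Up irrelevance, uniformity of the witnessed draw, detection of false $S$ values via the two disjoint routes, and the final color-choice step), only spelled out in more detail. The one small slip is the claim that $w(a)\neq u$; the witness may well be the prompting neighbor, but in that case $u$ already knows $S(a)$ from the joint draw, so the verification argument goes through unchanged.
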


\begin{proof}
	Let $a$ be an arbitrary agent. Assume in contradiction that at some round $r$ there is a possible step $s \neq s_r$ such that:
	$$ \mathbb{E}_{s,r}[u_a] > \mathbb{E}_{s_r,r}[u_a]$$

	Consider the possible deviations for $a$ in every phase of Algorithm~\ref{alg:SemiTotalOrder}:
	\begin{itemize}
		\item Cheating in Wake-Up.
		The expected utility is independent of the order by which agents draw their random number in Algorithm~\ref{alg:Draw}, i.e., the order by which agents initiate Algorithm~\ref{alg:Draw} has no effect on the order by which they will later set their colors.
		so $a$ has no incentive to publish a false $id$ in the Wake-Up building block.
		\item Drawing a random number with a witness is an equilibrium: Both agents send a random number at the same round.
		\item Publishing a false $S$ value will be caught by a future verification process with $w(a)$ when all $S$ values are verified (step 10 of Algorithm~\ref{alg:SemiTotalOrder}).
		\item Sending a color message not in order will be immediately recognized by the neighbors, since $S$ values were verified.
		\item $a$ might output a different color than the color dictated by Algorithm~\ref{alg:SemiTotalOrder}. But if the preferred color is available,
		then outputting it is the only rational behavior.
		Otherwise, the utility for the agent is already $0$ in any case.
	\end{itemize}

	Thus, the algorithm solves Coloring and is an equilibrium.	
\end{proof}

\pagebreak

\section{How Much Knowledge Is Necessary?}
\label{appendix_bounds}

\subsection{Knowledge Sharing is $(2\alpha-2)$-bound}
Here we prove Theorem~\ref{theorem:ks-bound} and Corollary~\ref{cor:k=2}.

\label{theorem:ks-bound:proof}
\begin{proof}
	
	Assume agents have $(\alpha,\beta)$-knowledge for some $\alpha,\beta$.
	A cheating agent $a$'s goal is to choose a value $d$,
	the number of agents it pretends to be, that maximizes its expected utility.
	
	Let $k$ be the number of possible outputs of a Knowledge Sharing algorithm,
	i.e., the range of the output function $g$ is of size $k$. 
	By the Full Knowledge property (definition \ref{full_knowledge}), any output is \emph{equally} possible. 
	Therefore, without deviation the expected utility of $a$ at round $0$ is:
	$\mathbb{E}_{s(0),0}[u_a] = \frac{1}{k}$.
	
	According to Theorem~\ref{theorem:ks-limit}, Algorithm~\ref{alg:appendix:ks-ring} is an equilibrium
	for Knowledge Sharing in a ring when a cheating agent pretends to be $n$ agents or less.
	Corollary~\ref{cor:ks-limit} shows that when a cheating agent
	pretends to be \emph{more} than $n$ agents,
	no algorithm for Knowledge Sharing is an equilibrium. 
	Thus, looking at all possible values of $n$ in the range $[\alpha,\beta]$,
	$a$ wants to maximize the probability that $d > n$ and the duplication increases its utility,
	while also minimizing the probability that $d+n-1 > \beta$ and the algorithm fails.

	If $a$ cheats and pretends to be $d$ agents, then necessarily $d \geq \alpha$,
	otherwise according to Theorem~\ref{theorem:ks-limit} there is an equilibrium,
	and the duplication does not increase its utility for any value of $n$ in $[\alpha,\beta]$.
	Additionally it holds that $d \leq \lceil \frac{\beta}{2} \rceil +1$,
	since any higher value of $d$ increases $a$'s chances of being caught and failing the algorithm
	(when $d+n-1 > \beta$),
	without increasing the number of possible values of $n$ for which its utility is higher
	(when $d > n$ and $d+n-1 \leq \beta$).

	From $a$'s perspective at the beginning of the algorithm,
	the value of $n$ is uniformly distributed over $[\alpha,\beta]$,
	i.e., there are a total of $\beta-\alpha+1$ equally possible values for $n$.
	According to Corollary~\ref{cor:ks-limit} when $d > n$
	agent $a$ can increase its expected utility by deviating. 
	Let $X > \frac{1}{k}$ be the utility $a$ gains by pretending to be $d > n$ agents successfully
	(i.e., when $d+n-1 \leq \beta$ and the algorithm does not fail).
	Since $n$ is uniformly distributed over $[\alpha,\beta]$ this has a probability of $\frac {d-\alpha}{\beta - \alpha +1}  $ to occur.
	On the other hand, when pretending to be $d \leq n$ agents successfully
	the utility of $a$ does not change and is $\frac{1}{k}$,
	and this has a probability of 
	$\frac{\lceil \frac{\beta}{2} \rceil - d + 1}{\beta - \alpha + 1}$.
	In all other cases $d+n-1>\beta$ and the algorithm fails, resulting in a utility of $0$.
	Thus, the expected utility of agent $a$ at round $0$:
	
	\begin{equation}
	\label{ks:bound:utility}
	\mathbb{E}_{D,0}[u_a] =
		X \cdot \frac {d-\alpha}{\beta - \alpha +1}  +
		\frac{1}{k} \cdot \frac{\lceil \frac{\beta}{2} \rceil - d + 1}{\beta - \alpha + 1}
	\end{equation}
	
	By the constraints specified above, the value of $d$ that maximizes (\ref{ks:bound:utility})
	is $d = \lfloor \frac{\beta}{2} \rfloor + 1$,
	and $a$ will deviate from the algorithm whenever $\mathbb{E}_{D,0}[u_a] > \frac{1}{k}$.
	To find the $f$-bound on $[\alpha,\beta]$ we derive $\beta$ as a function of $\alpha$:

	$$ \mathbb{E}_{D,0}[u_a] = 
		X \cdot \frac {\lfloor \frac{\beta}{2} \rfloor + 1-\alpha}{\beta - \alpha +1} +
		\frac{1}{k} \cdot
		\frac{\lceil \frac{\beta}{2} \rceil - \lfloor \frac{\beta}{2} \rfloor }{\beta - \alpha + 1} >
		\frac{1}{k}$$
	
	\begin{equation}
	\label{eq_ks_expected_utility}
		\begin{cases}
			\beta \text{ is even} & \beta(kX - 2) > 2\alpha kX - 2kX -2\alpha +2 \\
			\beta \text{ is odd} &  \beta(kX - 2) > 2\alpha kX - kX -2\alpha	
		\end{cases}
	\end{equation}
	
	From (\ref{eq_ks_expected_utility}) we can derive the following:
	\begin{enumerate}
		\item The inequations are satisfiable only if $X > 2 \cdot \frac{1}{k}$.
		Since $X \leq 1$, $2$-Knowledge Sharing ($k=2$) cannot satisfy the inequations
		and $a$ never has an incentive to deviate, given \emph{any} bound.
		This proves Corollary~\ref{cor:k=2}.
		
		\item For Knowledge Sharing, we find the range $[\alpha, \beta]$ that holds for any $k$.
		As $k$ grows large, $\frac{1}{k}$ nears $0$. 
		Assuming the profit when duplication is successful is $X=1$,
		agent $a$ has an incentive to deviate when
		$ \lfloor \frac{\beta}{2} \rfloor + 1 - \alpha > 0 $.
		When $\beta$ is even: $\beta > \alpha - 2$ , and when $\beta$ is odd: $\beta > \alpha - 1$.
		Thus, Algorithm~\ref{alg:appendix:ks-ring} is an equilibrium for Knowledge Sharing when agents have
		$(\alpha,\beta)$-knowledge such that $\beta \leq \alpha-2$,
		and for any algorithm for Knowledge Sharing there exist $\alpha,\beta>\alpha-2$ such
		that there is no equilibrium when agents have $(\alpha,\beta)$-knowledge.
		This proves Theorem~\ref{theorem:ks-bound}.
	\end{enumerate}
\end{proof}

\subsection{Coloring is $\infty$-bound}
\label{theorem:coloring:proof}

Here we prove Theorem~\ref{theorem:coloring:bound}.

\begin{proof}
	Consider Algorithm~\ref{algorithm:coloring:ring} which solves coloring in a ring
	using $2$-Knowledge Sharing.
	
	\begin{algorithm}
	\caption{Coloring in a Ring}
	\label{algorithm:coloring:ring}
	\begin{algorithmic}[1]
		\item Wake-Up to learn the size of the ring
		\item Assume an arbitrary global direction over the ring
			(this can be relaxed via Leader Election
			\cite{Afek:2014:DCB:2611462.2611481})
		\item Run $2$-Knowledge Sharing to randomize a single global bit $b \in \{0,1\}$
		\item Publish the preferred color of each agent simultaneously over the entire ring
		\item In each group of consecutive agents that prefer the same color,
			if $b=0$ the even agents (according to the orientation) output their preferred color,
			else the odd agents do.
		\item If an agent has no neighbors who prefer the same color,
			it outputs its preferred color.
		\item Any other agent outputs the minimal available color.
	\end{algorithmic}
	\end{algorithm}
	
	It is easy to see that Algorithm~\ref{algorithm:coloring:ring} is an equilibrium
	and results in a legal coloring of the ring.
	It uses $2$-Knowledge Sharing and thus, following Corollary~\ref{cor:k=2},
	it proves Theorem~\ref{theorem:coloring:bound}.
\end{proof}

\subsection{Leader Election is $(\alpha+1)$-bound}
\label{theorem:le:proof}

Here we prove Theorem~\ref{theorem:leader-bound}.

\begin{proof}
	Recall that any Leader Election algorithm must be \emph{fair} \cite{DISC13/ADH}, i.e.,
	every agent must have equal probability of being elected leader for the algorithm to be an equilibrium. 
	
	Given $f(\alpha) = \alpha + 1$, the actual number of agents $n$
	is either $\alpha$ or $\alpha + 1$, decided by some distribution unknown to the agents.
	If an agent follows the protocol, the probability of being elected is $\frac{1}{n}$.
	If it duplicates itself once, the probability that a duplicate is elected is $\frac{2}{n+1}$,
	but if $n = \alpha+1$ the protocol fails and the utility is $0$.
	Thus $\mathbb{E}_{s_{dup},a}[u_a] = \frac{1}{2}  \frac{2}{n+1} < \frac{1}{n}$,
	i.e., no agent has an incentive to deviate. 
	
	Given $f(\alpha) = \alpha + 2$, then $n$ is in $[\alpha,\alpha + 2]$.
	If an agent follows the protocol, its expected utility is still $\frac{1}{n}$. 
	If it duplicates itself once, the probability that a duplicate is elected is still $\frac{2}{n+1}$,
	however \emph{only} if $n = \alpha + 2$ the protocol fails. 
	Thus, $\mathbb{E}_{s_{dup},a}[u_a] = \frac{2}{3} \frac{2}{n+1} > \frac{1}{n}$ for any $n > 3$. 
	So the agent \emph{has} an incentive to deviate.
	
	Thus for $f(\alpha) = \alpha + 1$ the algorithm presented in \cite{DISC13/ADH} is an equilibrium, 
	while for $f(\alpha) = \alpha + 2$ no algorithm for Leader Election is an equilibrium, 
	since any algorithm must be \emph{fair}.
\end{proof}

\subsection{Ring Partition is Unbounded}
\label{theorem:partition:proof}

Here we prove Theorem~\ref{theorem:partition-bound}.

\begin{proof}
	
	It is clear that an agent will not duplicate itself to change the parity of the graph,
	as that will necessarily cause an erroneous output.
	So it is enough to show an algorithm that is an equilibrium
	for even graphs, when agents have no knowledge about $n$.
	Consider the following algorithm:
	\begin{itemize}
		\item Either one arbitrary agent wakes up or we run a Wake-Up subroutine and then
		Leader Election \cite{Afek:2014:DCB:2611462.2611481}.
		Since the initiator (leader) has no effect on the output,
		both are an equilibrium in this application.
		\item The initiating agent sends a token which alternatively marks agents by 0 or 1 and also defines the direction of communication in the ring.
		\item Upon reception of the token with value $m$, an agent $a$ does one of the following:
		\begin{enumerate}
			\item If $m = 0$, send predecessor (denoted $p(a)$) a random bit $t_a$.
			\item Else, if $m = 1$, wait for 1 round and send successor (denoted $s(a)$) a random bit $t_a$.
		\end{enumerate}
		\item Upon reception of the neighbor's bit (one round after receiving the token), set 
		$$ o_a = (t_a + t_{s(a)/p(a)} + m)_{mod2} $$
		\item As the token arrives back at the initiator, it checks the token's parity. For even rings, it must be the opposite value from the value it originally sent.
	\end{itemize}
	
	This algorithm divides every pair of agents to one with output $1$ and one with output $0$, as the token value $m$ is different, thus achieving a partition.
	
	We show that it is also an equilibrium.	
	Assume an agent $a$ deviates at some round $r$. If $r$ is in the Wake-Up or Leader Election
	phase in order to be the initiator, it cannot improve its utility since choosing the starting value of the token, choosing the direction, or being first cannot increase the agent's utility. 
	If it is a deviation while the token traverses other parts of the graph, any message $a$ sends will eventually be discovered, as the real token has either already passed or will eventually pass through the "cheated" agent.
	If $a$ changes the value of the token, a randomization between two agents will be out of sync eventually at the end of the token traversal, and also the initiator will recognize that the ring does not contain an even number of agents.
	During the exchange of $t_a$ the result is independent of $a$'s choice of value for $t_a$.
	So there is no round in which $a$ can deviate from the protocol.
\end{proof}

\subsection{Orientation is Unbounded}
\label{theorem:orientation:proof}

Here we prove Theorem~\ref{theorem:orientation-bound}.

\begin{proof}
	We show a simple algorithm and prove that it is an equilibrium without any a-priori knowledge of $n$ or bounds on $n$.
	Assuming all agents start at the same round (otherwise run Wake-Up), consider the following algorithm:
	\begin{itemize}
		\item Each agent simultaneously send a random number $(0,1)$ and its $id$ on each of its edges.
		\item For each edge, XOR the bit you sent and the bit received over that edge
		\item If the result is 1, the edge is directed towards the agent with the higher $id$, otherwise it is directed towards the lower $id$.
		\item Every agent outputs the list of pairs with $id$ and direction for each of its neighbors. 
	\end{itemize}
	
	Since an agent's utility is defined over its personal output, Solution Preference inhibits agents to output a correct set of pairs, so a cheater may only influence the direction of the edges.
	Since duplication does not create any new edges between the cheater and the original graph, and the orientation is decided over each edge independently, it does not effect any agent's utility. Other than that, randomizing a single bit over an edge at the same round is in equilibrium. So the algorithm is an equilibrium, and Orientation is unbounded.
\end{proof}

\end{document}